\begin{document}
\newtheorem{theorem}{Theorem}
\newtheorem{acknowledgement}[theorem]{Acknowledgement}
\newtheorem{axiom}[theorem]{Axiom}
\newtheorem{case}[theorem]{Case}
\newtheorem{claim}[theorem]{Claim}
\newtheorem{conclusion}[theorem]{Conclusion}
\newtheorem{condition}[theorem]{Condition}
\newtheorem{conjecture}[theorem]{Conjecture}
\newtheorem{criterion}[theorem]{Criterion}
\newtheorem{definition}{Definition}
\newtheorem{exercise}[theorem]{Exercise}
\newtheorem{lemma}{Lemma}
\newtheorem{corollary}{Corollary}
\newtheorem{notation}[theorem]{Notation}
\newtheorem{problem}[theorem]{Problem}
\newtheorem{proposition}{Proposition}
\newtheorem{scheme}{Scheme}   
\newtheorem{solution}[theorem]{Solution}
\newtheorem{summary}[theorem]{Summary}
\newtheorem{assumption}{Assumption}
\newtheorem{example}{\bf Example}
\newtheorem{remark}{\bf Remark}

\def\qed{$\Box$}
\def\QED{\mbox{\phantom{m}}\nolinebreak\hfill$\,\Box$}
\def\proof{\noindent{\emph{Proof:} }}
\def\poof{\noindent{\emph{Sketch of Proof:} }}
\def
\endproof{\hspace*{\fill}~\qed
\par
\endtrivlist\unskip}
\def\endproof{\hspace*{\fill}~\qed\par\endtrivlist\vskip3pt}

\def\E{\mathsf{E}}
\def\eps{\varepsilon}
\def\phi{\varphi}
\def\Lsp{{\boldsymbol L}}
\def\Bsp{{\boldsymbol B}}
\def\lsp{{\boldsymbol\ell}}
\def\Ltsp{{\Lsp^2}}
\def\Lpsp{{\Lsp^p}}
\def\Linsp{{\Lsp^{\infty}}}
\def\LtR{{\Lsp^2(\Rst)}}
\def\ltZ{{\lsp^2(\Zst)}}
\def\ltsp{{\lsp^2}}
\def\ltZt{{\lsp^2(\Zst^{2})}}
\def\ninN{{n{\in}\Nst}}
\def\oh{{\frac{1}{2}}}
\def\grass{{\cal G}}
\def\ord{{\cal O}}
\def\dist{{d_G}}
\def\conj#1{{\overline#1}}
\def\ntoinf{{n \rightarrow \infty }}
\def\toinf{{\rightarrow \infty }}
\def\tozero{{\rightarrow 0 }}
\def\trace{{\operatorname{Tr}}}
\def\ord{{\cal O}}
\def\UU{{\cal U}}
\def\rank{{\operatorname{rank}}}
\def\acos{{\operatorname{acos}}}

\def\SINR{\mathsf{SINR}}
\def\SNR{\mathsf{SNR}}
\def\SIR{\mathsf{SIR}}
\def\tSIR{\widetilde{\mathsf{SIR}}}
\def\Ei{\mathsf{Ei}}
\def\l{\left}
\def\r{\right}
\def\({\left(}
\def\){\right)}
\def\lb{\left\{}
\def\rb{\right\}}

\setcounter{page}{1}

\newcommand{\eref}[1]{(\ref{#1})}
\newcommand{\fig}[1]{Fig.\ \ref{#1}}

\def\bydef{:=}
\def\ba{{\mathbf{a}}}
\def\bb{{\mathbf{b}}}
\def\bc{{\mathbf{c}}}
\def\bd{{\mathbf{d}}}
\def\bee{{\mathbf{e}}}
\def\bff{{\mathbf{f}}}
\def\bg{{\mathbf{g}}}
\def\bh{{\mathbf{h}}}
\def\bi{{\mathbf{i}}}
\def\bj{{\mathbf{j}}}
\def\bk{{\mathbf{k}}}
\def\bl{{\mathbf{l}}}
\def\bn{{\mathbf{n}}}
\def\bo{{\mathbf{o}}}
\def\bp{{\mathbf{p}}}
\def\bq{{\mathbf{q}}}
\def\br{{\mathbf{r}}}
\def\bs{{\mathbf{s}}}
\def\bt{{\mathbf{t}}}
\def\bu{{\mathbf{u}}}
\def\bv{{\mathbf{v}}}
\def\bw{{\mathbf{w}}}
\def\bx{{\mathbf{x}}}
\def\by{{\mathbf{y}}}
\def\bz{{\mathbf{z}}}
\def\b0{{\mathbf{0}}}

\def\bA{{\mathbf{A}}}
\def\bB{{\mathbf{B}}}
\def\bC{{\mathbf{C}}}
\def\bD{{\mathbf{D}}}
\def\bE{{\mathbf{E}}}
\def\bF{{\mathbf{F}}}
\def\bG{{\mathbf{G}}}
\def\bH{{\mathbf{H}}}
\def\bI{{\mathbf{I}}}
\def\bJ{{\mathbf{J}}}
\def\bK{{\mathbf{K}}}
\def\bL{{\mathbf{L}}}
\def\bM{{\mathbf{M}}}
\def\bN{{\mathbf{N}}}
\def\bO{{\mathbf{O}}}
\def\bP{{\mathbf{P}}}
\def\bQ{{\mathbf{Q}}}
\def\bR{{\mathbf{R}}}
\def\bS{{\mathbf{S}}}
\def\bT{{\mathbf{T}}}
\def\bU{{\mathbf{U}}}
\def\bV{{\mathbf{V}}}
\def\bW{{\mathbf{W}}}
\def\bX{{\mathbf{X}}}
\def\bY{{\mathbf{Y}}}
\def\bZ{{\mathbf{Z}}}

\def\mA{{\mathbb{A}}}
\def\mB{{\mathbb{B}}}
\def\mC{{\mathbb{C}}}
\def\mD{{\mathbb{D}}}
\def\mE{{\mathbb{E}}}
\def\mF{{\mathbb{F}}}
\def\mG{{\mathbb{G}}}
\def\mH{{\mathbb{H}}}
\def\mI{{\mathbb{I}}}
\def\mJ{{\mathbb{J}}}
\def\mK{{\mathbb{K}}}
\def\mL{{\mathbb{L}}}
\def\mM{{\mathbb{M}}}
\def\mN{{\mathbb{N}}}
\def\mO{{\mathbb{O}}}
\def\mP{{\mathbb{P}}}
\def\mQ{{\mathbb{Q}}}
\def\mR{{\mathbb{R}}}
\def\mS{{\mathbb{S}}}
\def\mT{{\mathbb{T}}}
\def\mU{{\mathbb{U}}}
\def\mV{{\mathbb{V}}}
\def\mW{{\mathbb{W}}}
\def\mX{{\mathbb{X}}}
\def\mY{{\mathbb{Y}}}
\def\mZ{{\mathbb{Z}}}

\def\cA{\mathcal{A}}
\def\cB{\mathcal{B}}
\def\cC{\mathcal{C}}
\def\cD{\mathcal{D}}
\def\cE{\mathcal{E}}
\def\cF{\mathcal{F}}
\def\cG{\mathcal{G}}
\def\cH{\mathcal{H}}
\def\cI{\mathcal{I}}
\def\cJ{\mathcal{J}}
\def\cK{\mathcal{K}}
\def\cL{\mathcal{L}}
\def\cM{\mathcal{M}}
\def\cN{\mathcal{N}}
\def\cO{\mathcal{O}}
\def\cP{\mathcal{P}}
\def\cQ{\mathcal{Q}}
\def\cR{\mathcal{R}}
\def\cS{\mathcal{S}}
\def\cT{\mathcal{T}}
\def\cU{\mathcal{U}}
\def\cV{\mathcal{V}}
\def\cW{\mathcal{W}}
\def\cX{\mathcal{X}}
\def\cY{\mathcal{Y}}
\def\cZ{\mathcal{Z}}
\def\cd{\mathcal{d}}
\def\Mt{M_{t}}
\def\Mr{M_{r}}
\def\O{\Omega_{M_{t}}}
\newcommand{\figref}[1]{{Fig.}~\ref{#1}}
\newcommand{\tabref}[1]{{Table}~\ref{#1}}

\newcommand{\var}{\mathsf{var}}
\newcommand{\fb}{\tx{fb}}
\newcommand{\nf}{\tx{nf}}
\newcommand{\BC}{\tx{(bc)}}
\newcommand{\MAC}{\tx{(mac)}}
\newcommand{\Pout}{p_{\mathsf{out}}}
\newcommand{\nnn}{\nn\\}
\newcommand{\FB}{\tx{FB}}
\newcommand{\TX}{\tx{TX}}
\newcommand{\RX}{\tx{RX}}
\renewcommand{\mod}{\tx{mod}}
\newcommand{\m}[1]{\mathbf{#1}}
\newcommand{\td}[1]{\tilde{#1}}
\newcommand{\sbf}[1]{\scriptsize{\textbf{#1}}}
\newcommand{\stxt}[1]{\scriptsize{\textrm{#1}}}
\newcommand{\suml}[2]{\sum\limits_{#1}^{#2}}
\newcommand{\sumlk}{\sum\limits_{k=0}^{K-1}}
\newcommand{\eqhsp}{\hspace{10 pt}}
\newcommand{\tx}[1]{\texttt{#1}}
\newcommand{\Hz}{\ \tx{Hz}}
\newcommand{\sinc}{\tx{sinc}}
\newcommand{\tr}{\mathrm{tr}}
\newcommand{\diag}{\mathrm{diag}}
\newcommand{\MAI}{\tx{MAI}}
\newcommand{\ISI}{\tx{ISI}}
\newcommand{\IBI}{\tx{IBI}}
\newcommand{\CN}{\tx{CN}}
\newcommand{\CP}{\tx{CP}}
\newcommand{\ZP}{\tx{ZP}}
\newcommand{\ZF}{\tx{ZF}}
\newcommand{\SP}{\tx{SP}}
\newcommand{\MMSE}{\tx{MMSE}}
\newcommand{\MINF}{\tx{MINF}}
\newcommand{\RC}{\tx{MP}}
\newcommand{\MBER}{\tx{MBER}}
\newcommand{\MSNR}{\tx{MSNR}}
\newcommand{\MCAP}{\tx{MCAP}}
\newcommand{\vol}{\tx{vol}}
\newcommand{\ah}{\hat{g}}
\newcommand{\tg}{\tilde{g}}
\newcommand{\teta}{\tilde{\eta}}
\newcommand{\heta}{\hat{\eta}}
\newcommand{\uh}{\m{\hat{s}}}
\newcommand{\eh}{\m{\hat{\eta}}}
\newcommand{\hv}{\m{h}}
\newcommand{\hh}{\m{\hat{h}}}
\newcommand{\Po}{P_{\mathrm{out}}}
\newcommand{\Poh}{\hat{P}_{\mathrm{out}}}
\newcommand{\Ph}{\hat{\gamma}}
\newcommand{\mat}[1]{\begin{matrix}#1\end{matrix}}
\newcommand{\ud}{^{\dagger}}
\newcommand{\C}{\mathcal{C}}
\newcommand{\nn}{\nonumber}
\newcommand{\nInf}{U\rightarrow \infty}

\title{\huge Wireless Federated Langevin Monte Carlo: Repurposing Channel Noise for  Bayesian Sampling and Privacy}


\author{ \IEEEauthorblockN{Dongzhu Liu and Osvaldo Simeone   \vspace{-6 mm}  \footnote{D. Liu is with the School of Computing Science, University of Glasgow, UK (email: dongzhu.liu@glasgow.ac.uk). She was formerly with the Department of Engineering of Kings College London, UK.   O. Simeone  is with Department of Engineering of Kings College London, UK (email: osvaldo.simeone@kcl.ac.uk). He has received funding from the European Research Council (ERC) under the European Unions Horizon 2020 Research and Innovation Programme (Grant Agreement No. 725731). }}
}
\maketitle

\begin{abstract}

Most works on federated learning (FL) focus on the most common frequentist formulation of learning whereby the goal is minimizing the global empirical loss. Frequentist learning,  however, is known to be problematic in the regime of limited data as it fails to quantify epistemic uncertainty in prediction.  Bayesian learning provides a principled solution to this problem by shifting the optimization domain to the space of distribution in the model parameters. {\color{black}This paper proposes a novel mechanism for the efficient implementation of Bayesian learning in wireless systems. Specifically, we focus on a standard gradient-based Markov Chain Monte
Carlo (MCMC) method, namely Langevin Monte Carlo (LMC), and we introduce a novel protocol, termed Wireless Federated LMC (WFLMC), that is able to repurpose channel noise for the double role of seed randomness for MCMC sampling and of privacy preservation.} To this end, based on the analysis of the Wasserstein distance between sample distribution and global posterior distribution under privacy and power constraints, we introduce a power allocation strategy as the solution of a convex program. The analysis identifies distinct  operating regimes in which the performance of the system is power-limited, privacy-limited, or limited by the requirement of MCMC sampling. Both analytical and simulation results  demonstrate that, if the channel noise is properly accounted for under suitable conditions, it can  be fully repurposed for both MCMC sampling and privacy preservation, obtaining the same performance as in an ideal communication setting that is not  subject to privacy constraints.

\end{abstract}


\section{Introduction}

Federated learning (FL) protocols aim at coordinating multiple devices to collaboratively train a target model in a manner that approximates centralized learning at the cloud, while avoiding the direct exchange of data \cite{park2019wireless, zhu2020toward}. 
Most prior works on wireless FL consider a \emph{frequentist} formulation whose goal is minimizing the empirical loss over the vector of model parameters \cite{sery2021over, yang2021revisiting,zhu2019broadband, liu2020privacy,zhu2020one}. Significant attention has been devoted to uncoded transmission schemes coupled with non-orthogonal multiple access (NOMA), which leverage the superposition property of wireless channels to enable efficient over-the-air aggregation at the server \cite{zhu2019broadband, liu2020privacy}.  Furthermore, FL protocols inevitably leak some information about local data via communication. Formal privacy requirements can be met by introducing randomness to the disclosed statistics \cite{dwork2014algorithmic}. When implementing uncoded transmission, noise in wireless channels was accordingly shown to serve as a privacy-preserving mechanism \cite{koda2020differentially,liu2020privacy}.

Frequentist learning is effective in the regime of large data sets when accuracy is the main concern, but it fails to quantify \emph{epistemic uncertainty} due to the availability of limited data \cite{guo2017calibration,lakshminarayanan2016simple}. \emph{Bayesian learning} provides an alternative learning framework in which optimization is done over the distribution of model parameters rather than over a single model parameter vector as in frequentist learning. 
Practical Bayesian learning methods include  \emph{variational inference} (VI), which constrains the model distribution to a parameter family,  and \emph{Monte Carlo} (MC) sampling,  which draws samples approximately generated from the optimal model distribution \cite{angelino2016patterns}. 

This paper represents the first work on Bayesian FL in wireless networks. {\color{black}We specifically adopt \emph{Langevin Monte Carlo} (LMC), a gradient-based \emph{Markov Chain Monte Carlo} (MCMC) method that adds Gaussian noise to \emph{gradient descent} (GD) updates. LMC is a fundamental building block of computationally efficient Bayesian inference and learning strategies. Unlike simpler random-walk MCMC methods, LMC leverages first-order information about the probabilistic model, striking a useful trade-off between complexity and performance \cite{angelino2016patterns, ma2015complete}. LMC can be generalized and improved in various directions, such as by accounting also for second-order information \cite{dalalyan2020sampling, zou2021convergence}. 

 The key contribution of this paper is not that of introducing a new Bayesian learning algorithm. Rather, we introduce a new mechanism for the efficient, and private, implementation of LMC over wireless channels. The approach is based on the idea that channel noise can be repurposed for the double role of seed randomness for the implementation of MC sampling and of privacy-preservation. Our analytical and experimental results provide insights about operating regimes in which channel noise can effectively serve both functions. It is envisaged that the proposed novel method of exploiting channel noise for MC sampling could also be applied and optimized for more sophisticated MCMC solutions such as Hamiltonian Monte Carlo \cite{zou2021convergence}.}
\subsection{Related Work}

\subsubsection{Frequentist and Bayesian FL} FL protocols alternate between local computing and communication steps. In frequentist FL protocols, devices exchange model parameter vectors, which may be first quantized and compressed \cite{alistarh2017qsgd}.  When implemented over wireless channels, FL can benefit from over-the-air aggregation via uncoded transmission -- an approach known as AirComp \cite{zhu2019broadband, sery2020analog,liu2020privacy}. AirComp can be combined with sparsification and compression to reduce the communication overhead \cite{fan2021temporal}.  Bayesian post-processing estimation methods have been proposed to improve the test accuracy of federated learning, e.g., by exploiting the temporal structure of the received signals \cite{fan2021temporal}, by incorporating information about channel distribution and local prior \cite{lee2020bayesian}, or by addressing data heterogeneity via knowledge distillation \cite{chen2020fedbe,zhu2021data}. Note that the schemes in  \cite{fan2021temporal,lee2020bayesian} do not implement Bayesian learning in the sense explained above of optimizing over a distribution in the model parameter space; while \cite{chen2020fedbe,zhu2021data} consider ideal communication and require either unlabelled data at the server \cite{chen2020fedbe} or additional communication~overhead~\cite{zhu2021data}.

As discussed, Bayesian learning is, in practice, implemented by approximate methods  -- either VI or MC sampling. Both have been investigated only to a very limited extent for FL, even in the presence of ideal communication. VI-based methods are proposed in  \cite{corinzia2019variational,kassab2020federated} for noiseless communications based on parametric and particle-based representations of the model parameter distribution.  Gradient-based MC methods are instead investigated in \cite{elfederated,vono2021qlsd}, again under ideal~communications. 


\subsubsection{Private FL}
Differential privacy (DP) is a strong measure of information leakage that relates to the sensitivity of the disclosed statistics on individual data points in the training data set. 
In FL, a standard model is to assume the edge server to be ``honest-but-curious", requiring the implementation of DP-preserving mechanisms such as noise addition, subsampling random mini-batches, and random quantization of the gradients \cite{gandikota2019vqsgd, agarwal2018cpsgd}.  
Wireless FL can repurpose channel noise so as to ensure DP guarantees by controlling the signal-to-noise (SNR) ratio via transmit power optimization \cite{koda2020differentially}.  Furthermore, the superposition property of NOMA not only achieves efficient aggregation, but also amplifies  the role of the channel noise as a privacy mechanism by protecting multiple devices' transmissions simultaneously \cite{seif2020wireless}.  To enhance the convergence rate under the DP constraints,  reference \cite{liu2020privacy} proposes an optimized adaptive power control strategy that increases the effective SNR over the iterations. We note that the presence of channel noise can also benefit learning by accelerating the convergence for non-convex models \cite{sery2021over,zhang2021turning}, or improving the generalization capability of convex models~\cite{yang2021revisiting}.  

\subsubsection{Private Bayesian learning} For Bayesian learning, the inherent randomness induced MC sampling automatically satisfies some level of DP requirements. Specifically, producing a single sample from the exact (or approximate) posterior distribution implements a differentially private strategy known as the exponential mechanism
 \cite{wang2015privacy}. This result can be extended to multiple samples in gradient-based MCMC under proper conditions, such as small learning rate \cite{wang2015privacy} or large scale model~\cite{li2019connecting}. 
 All prior work on private Bayesian learning is limited to centralized settings, and no prior result appears to have studied privacy in the context of Bayesian FL. 

\subsection{Contributions and Organization}
In this paper, we 
introduce a federated implementation of LMC in wireless systems whereby  power allocation is optimized to control the SNR level so as to meet the requirement of both MC sampling and DP. The main contributions and findings of the paper are summarized as follows.

\noindent $\bullet$ {\bf Introducing Wireless Federated Langevin Monte Carlo (WFLMC):} We first introduce Wireless Federated Langevin Monte Carlo (WFLMC), a novel iterative Bayesian learning protocol that relies on power control to repurpose channel noise for the double role of MC sampling via LMC and privacy preservation. WFLMC is based on uncoded transmission and NOMA, and goes beyond existing frequentist AirComp strategies by quantifying epistemic uncertainty through Bayesian learning.

\noindent $\bullet$ {\bf Analyzing WFLMC:}  Unlike frequentist learning, the goal of MC sampling in Bayesian learning is to ensure that the distribution of the produced samples is close to the global posterior distribution. Accordingly, we measure the learning performance via the 2-Wasserstein distance between the two distributions as in \cite{dalalyan2017further}. We provide analytical bounds on the 2-Wasserstein distance that comprise the contribution of the discretization error incurred by LMC, as well as of the gradient error due to channel noise and scheduling. We also present a DP privacy analysis of WFLMC that provides insights into the impact of the channel noise  on the privacy loss.

\noindent $\bullet$ {\bf Optimized power allocation and scheduling:} Building the analytical results, we formulate the optimization of the power allocation and scheduling policy as the minimization of  the 2-Wasserstein distance under DP and power constraints. The resulting optimization is shown to be a convex program,  and a closed-form solution is provided under simplifying assumptions. The analysis identifies distinct operating regimes in which the performance is power-limited, DP-limited, or LMC-limited. The three regimes are determined by the relative values of transmitted power, privacy level, and learning rate. The analytical results demonstrate that in the LMC-limited regime channel noise can be fully repurposed for both MC sampling and privacy preservation, obtaining the same performance as in an ideal communication setting that is not subject to DP constraints. 
For the general case, we formulate a min-max problem that can be converted into a convex problem. 

\noindent $\bullet$ {\bf Experiments:} We provide extensive numerical results to demonstrate the joint role of channel noise for MC sampling and privacy. 

In closing this section, we would like to emphasize the relationship of this work with our previous papers \cite{liu2020privacy} and  \cite{liu2021channel}. As mentioned, in \cite{liu2020privacy}, we considered frequentist FL on a wireless channel, and analyzed the problem of optimal power allocation for an AirComp-based strategy that leverages channel noise as a privacy mechanism. The problem formulation has a minor overlap with the setting studied here, which focuses on Bayesian learning. In fact, Bayesian learning requires the analysis of performance metrics based on distributions in the model parameter space \cite{dalalyan2017further}, and it cannot rely on the standard tools for the convergence of gradient-based schemes used in \cite{liu2020privacy}. In contrast, reference \cite{liu2021channel} introduces a one-shot Bayesian protocol for a wireless data center setting in which the server has access to the global data set. In this system, the global data set is divided up among the workers to benefit from computational parallelism, but the server uses its access to the global data set during training. Specifically, paper \cite{liu2021channel} proposes a novel VI-based strategy that builds on consensus MC \cite{rabinovich2015variational} by accounting for the presence of fading and channel noise. The contribution of \cite{liu2021channel} is distinct from the current manuscript for a number of reasons. First,  in the current work, we study for the first time iterative, rather than one-shot, Bayesian learning protocols. Second, we concentrate on a federated setting in which the server does not have access to the global data set. Third, no privacy constraints are assumed in \cite{liu2021channel}. And, fourth, unlike \cite{liu2021channel}, this work provides an analysis of the optimal power allocation strategy and draws theoretical conclusions on the capacity of the channel noise to serve the double role of seed randomness for MC sampling and privacy protection.

{\bf Organization:}  The remainder of the paper is organized as follows. 
Section~\ref{sec: system model} introduces the system model. 
Section~\ref{sec: WFLMC} proposes the design of WFLMC. 
Section~\ref{sec: analysis WFLMC} presents convergence and privacy analysis of WFLMC, while the optimal power allocation and scheduling are provided in Section \ref{sec: optmization}, followed by  numerical results in  Section~\ref{sec:simulation} and conclusions in Section~\ref{sec: conclusions}.

\section{System Model} \label{sec: system model}


\begin{figure}[t]
\centering
\includegraphics[width=15cm]{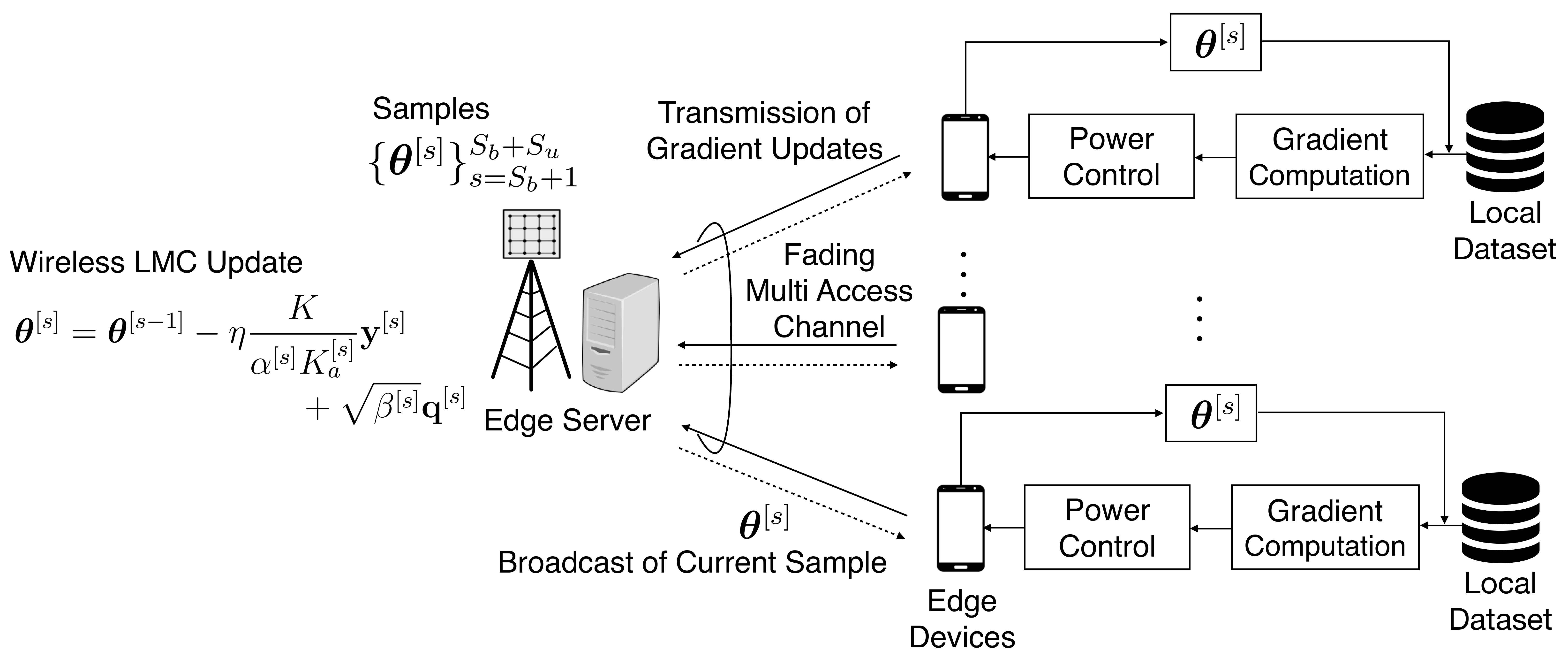}
\vspace{-6mm}
\caption{Differentially private federated Bayesian learning system based on Langevin Monte Carlo (LMC).}
\vspace{-6mm}
\label{Fig: FL system}
\end{figure}

As shown in Fig. \ref{Fig: FL system}, we consider a wireless federated edge learning system comprising a single-antenna edge server and $K$ edge devices connected through it via a shared wireless channel. Each device $k$, equipped with a single antenna, has its own local dataset $\cD_k$ encompassing $N_k$  data samples $\cD_k=\{ \bd_{k,n} \}_{n=1}^{N_k}$. For supervised learning applications, each data sample $\bd_n=(\bu_n, v_n)$ is in turn partitioned into a covariate vector $\bu_n$ and a label $v_n$; while, for unsupervised learning applications such as generative modeling, it consists of a single vector $\bd_n$. The global data set is denoted as  $\cD=\{\cD_k\}_{k=1}^K$. 
The goal of the system is to carry out Bayesian learning via gradient-based Monte Carlo (MC) sampling. Accordingly, through communication with devices, the server wishes to obtain a number of random samples of the model parameter vector ${\bm \theta}\in \mathbb{R}^m$ that are approximately distributed according to the global posterior distribution $p({\bm \theta} | \mathcal{D} )$. Unlike \cite{liu2021channel}, which studied one-shot protocols, the gradient-based MC methods studied in this paper are iterative, in a manner similar to standard federated learning protocols such as FedAvg. 
\vspace{-3mm}
\subsection{Langevin Monte Carlo}\vspace{-3mm}
The machine learning model adopted by the system is defined by a likelihood function  $p( \bd |{\bm \theta}  )$ as well as by a prior distribution $p(\bm \theta)$. Accordingly, the likelihood of the data at device $k$  is 
\begin{align}
 p(\cD_k|{\bm \theta}) = \prod_{n=1}^{N_k}  p(\bd_{n,k} |{\bm \theta}),
\end{align}
where the likelihoods $ p(\bd_{n,k} |{\bm \theta}) $ may be different across devices.
{\color{black}
The goal of Bayesian learning is to compute the global posterior 
\begin{align}\label{eq: posterior}
\text{(Global Posterior)} \quad p({\bm \theta}|\cD)  \propto p({\bm \theta}) \prod_{k=1}^K  p(\cD_k | {\bm \theta}).
\end{align}
The global posterior can be expressed in terms of the local posteriors at each device $k$, i.e.,
\begin{align}
\text{(Local Posterior)} \quad  p({\bm \theta}|\cD_k)  \propto p({\bm \theta})^{1/K}  p(\cD_k | {\bm \theta}) 
\end{align}
since we have the equality 
\begin{align}
p({\bm \theta}|\cD) \! \propto \! \prod_{k=1}^Kp({\bm \theta}|\cD_k). 
\end{align}
In contrast, frequentist learning is concerned with the optimization of the global cost function
\begin{align}\label{eq: gb func}
\text{(Global Cost Function)} \quad  f({\bm \theta })=-\log p({\bm \theta}|\cD)  = -\sum_{k=1}^K \log p({\bm \theta}|\cD_k), 
\end{align} 
which is the log-loss evaluated from the unnormalized posterior \eqref{eq: posterior}. Minimizing the global cost function \eqref{eq: gb func} yields the maximum a posterior (MAP) solution of frequentist learning. The global cost function \eqref{eq: gb func} can be expressed in terms of the local cost functions
\begin{align}\label{eq: local func}
\text{(Local Cost Function)} \quad f_k(\bm \theta)=-\log p(\cD_k|{\bm \theta}) -\frac{1}{K}\log p(\bm \theta),
\end{align} 
since we have the equality $ f(\bm \theta)= \sum_{k=1}^K f_k(\bm \theta)$.
} 

{\color{black}
Directly computing the global posterior distribution \eqref{eq: posterior} is generally of prohibitive complexity. To address this problem, Monte Carlo (MC) methods represent the global posterior distribution \eqref{eq: posterior} in terms of samples approximately distributed from it. Specifically, Markov Chain MC (MCMC) techniques produce a sequence of samples ${\bm \theta^{[s]}}$ with $s=1,2,...$ with the key property that as $s$ grows large, the marginal distribution of sample ${\bm \theta^{[s]}}$ tends to the desired posterior distribution. As discussed in Sec. I, in this paper, we specifically adopt Langevin MC (LMC), a fundamental MCMC technique that has been widely studied as a means to strike a practical balance between complexity and efficiency (see, e.g.,  \cite{angelino2016patterns, ma2015complete}).  LMC is a gradient-based MCMC sampling scheme building on the global cost function \eqref{eq: gb func}.  As we elaborate on next, the LMC update is derived as an approximation of a continuous-time differential process that has the desired  property of asymptotically producing samples drawn from the global posterior~\eqref{eq: posterior}. 

To start, we first introduce the continuous-time Langevin diffusion process  (LDP) $\{\bar{\bm \theta}^{(t)}:  t\in \mathbb{R}^+ \}$  follows the stochastic differential~equation
  \begin{align} \label{eq: LD}
\text{(LDP)} \quad   \mathrm{d} \bar{\bm \theta}^{(t)}  =   - \nabla f(\bar{\bm \theta}^{(t)} )  \mathrm{d} t +\sqrt{2} \mathrm{d} \boldsymbol{B}^{(t)} , 
 \end{align}
 where $\boldsymbol{B}^{(t)} $ represents  Brownian motion. The Langevin diffusion process \eqref{eq: LD} has the invariant stable distribution $p^*(\bm \theta) \!=\!p({\bm \theta}|\cD)\!\propto\! \exp[-f(\bm \theta)]$, which corresponds to the desired global posterior \eqref{eq: posterior} \cite{bhattacharya1978criteria}.}

 The integral of the stochastic differential equation \eqref{eq: LD} in the range of $t\in[s \eta, (s+1)\eta]$ for integers $s=1,2,\dots$ and duration $\eta$ yields
 \begin{align}\label{eq: diff LD}
\bar{\bm \theta}^{((s+1)\eta)} -  \bar{\bm \theta}^{(s\eta)} = -\int_{s\eta}^{(s+1)\eta}  \nabla f(\bar{\bm \theta}^{(t)} )  \mathrm{d} t + \sqrt{2\eta} {\bm \xi}^{[s+1]}, 
  \end{align}
  where $\{{\bm \xi}^{[s+1]}\}$ is a sequence of  identical and independent (i.i.d.) random vectors following the Gaussian distribution  $\cN(0, \bI_m)$.
 This is because we have the equality $ \sqrt{2}  \int_{s\eta}^{s\eta+\eta} \mathrm{d}\boldsymbol{B}^{(t)}= \sqrt{2} (\boldsymbol{B}^{(s\eta+\eta)}-\boldsymbol{B}^{(s\eta)})\sim\cN(0,2\eta)$. 
  
  In the limit of a sufficiently small $\eta$ such that the following approximation holds 
 \begin{align}\label{eq: int. discre}
 \int_{s\eta}^{(s+1)\eta}  \nabla f(\bar{\bm \theta}^{(t)} ) \mathrm{d} t \approx \eta  \nabla f( {\bm \theta}^{[s]} ) , 
 \end{align} 
  the discretization \eqref{eq: diff LD} results in Langevin MC (LMC),  a gradient-based Markov chain MC (MCMC) sampling scheme that proceeds according to the iterative update rule 
\begin{align} \label{eq: LMC}
\text{(LMC)} \quad {\bm \theta}^{[s+1]}={\bm \theta}^{[s]} -\eta \nabla f({\bm \theta}^{[s]})  + \sqrt{2\eta} {\bm \xi}^{[s+1]}, 
\end{align} 
where we have $   {\bm \theta}^{[s]} = \bar{\bm \theta}^{(s\eta)} $ and $\eta$ as step size.  The error due to approximation \eqref{eq: int. discre} was studied in \cite{dalalyan2017further, chatterji2018theory}, and it will be further discussed in Sec. \ref{sec: convergence analysis}. 
 
  In order to obtain samples ${\bm \theta}^{[s]}$ approximately drawn from the global posterior distribution, LMC discards the samples produced in the first $S_b$ iterations \eqref{eq: LMC}, also known as \emph{burn-in period}. The remaining $S_u$ samples ${\bm \theta}^{[s]}$ with $s=S_b+1, S_b+2, \dots,S_b+S_u,$ are retained and used for  downstream applications such as ensemble prediction.

 {\color{black}We consider LMC in this work due to its simplicity and scalability, as LMC updates amount to adding Gaussian noise to gradient descent updates.  Other gradient-based MCMC methods, such as kinetic MC  \cite{dalalyan2020sampling}  and Hamilton MC \cite{zou2021convergence}, which modify higher-order updates may also be studied in a manner similar to this paper and are left for future research.}

 \vspace{-3mm}
 \subsection{Learning Protocol}
 The goal of the system under study is to implement LMC \eqref{eq: LMC} in the described federated setting with $K$ devices, while satisfying formal differential privacy (DP) guarantees to be detailed in Sec. \ref{sec: metrics}.
The protocols is organized in iterations $s=1,2, \dots, S_b+S_u$ with $S_b$ denoting the burn-in period, across which the server maintains sample iterates ${\bm \theta}^{[s]}$.  {\color{black}The selection of burn-in period should avoid the regime in which the samples are too dependent on the Markov chain's initialization. A formal test for this purpose is the Gelman-Rubin diagnostic, which is based by using multiple  Markov chains. The diagnostic compares the estimated between-chains and within-chain variances for each model parameter, and chooses the burin-in period after which difference between theses variances are sufficiently small. }

At  each $s$-th communication round, the edge server broadcasts the current sample ${\bm \theta}^{[s]}$ to all edge devices via the downlink channel.  {\color{black}We assume that downlink communication is ideal, so that each device receives the sample ${\bm \theta}^{[s]}$ without distortion.  This assumption is practically well justified when the edge server communicates through a base station with less stringent power constraint than the devices and the use of the whole downlink bandwidth for broadcasting. It is commonly made in many related papers, such as \cite{zhu2019broadband,liu2020privacy,zhu2020one}. }  

By using the received vector ${\bm \theta}^{[s]}$ and the local dataset $\cD_k$, each device computes the gradient of the local cost function \eqref{eq: local func}  as
\begin{align}\label{eq: local gradient}
{(\text {Local gradient})} \quad   \nabla f_k\big({\bm \theta}^{[s]}\big)= -\sum_{n =1}^{N_k}\nabla \log p(\bd_n|{\bm \theta}^{[s]}) -\frac{1}{K} \nabla\log p(\bm \theta^{[s]}),
\end{align}
which is transmitted over the wireless shared channel to the edge server. The goal is to enable the edge server to approximate the update term in \eqref{eq: LMC}, namely 
\begin{align} \label{eq: LMC update term}
-\eta \nabla f({\bm \theta}^{[s]})  + \sqrt{2\eta} {\bm \xi}^{[s+1]}= -\eta \sum_{k=1}^K \nabla f_k({\bm \theta}^{[s]})  + \sqrt{2\eta} {\bm \xi}^{[s+1]} . 
\end{align}
As we will see, channel noise can be repurposed to contribute to the additive random term  ${\bm \xi}^{[s+1]}$ in the LMC update \eqref{eq: LMC}.    
The steps in \eqref{eq: local gradient} and \eqref{eq: LMC} are iterated across multiple communication rounds until a convergence condition is met. As a result, the server obtains a sequence of global model parameter vectors ${\bm \theta}^{[s]}$, with $s=1,2,\dots, S_b+S_u$. 


\subsection{Communication Model}\label{sec: comm model}

The devices communicate via the uplink to the edge server on the shared 
wireless channel. The proposed approach leverages analog transmission in order to: (i) benefit from over-the-air computing as in many prior works \cite{zhu2019broadband,sery2020analog,liu2020privacy}; and (ii) to repurpose channel noise for MC sampling and as a privacy mechanism.  
We assume a block flat-fading channel, where the channel coefficients remain constant within a communication block, and they vary in a potentially correlated way over successive blocks. 
 Each block contains $m$ channel uses, allowing the uncoded transmission of a gradient vector via  non-orthogonal multiple access (NOMA) as in  \cite{sery2020analog,liu2020privacy}. 

We assume symbol-level synchronization among the subset of devices that are scheduled in each block,  enabling over-the-air computing.  This can be achieved by using standard protocols such as the timing advance procedure in LTE and 5G NR \cite{mahmood2019time}. In the $s$-th communication round, the corresponding received signal is   
 \begin{align} \label{eq: rev signal}    
\by^{[s]}= \sum_{k=1}^K h_k^{[s]} \bx_k^{[s]} +\bz^{[s]},
\end{align}
where $h_k^{[s]}$ is the channel gain for device $k$ in round  $s$, $\bx_k^{[s]} \in\mathbb{R}^m$ is an uncoded  function of the local gradient $\nabla f_k\big({\bm \theta}^{[s]}\big)$, and $\bz^{[s]} $ is channel noise i.i.d. according to distribution $\mathcal{N}(0,{N_0}\bI_m)$.  
The transmit power constraint of a device is given as 
\begin{align}\label{eq: power constraint}
\|\bx_k^{[s]}\|^2\leq P,
\end{align}
accounting for per-block power constraints.

\subsection{Performance Metrics}\label{sec: metrics}
In this paper, we aim at designing a wireless federated learning protocol to implement Bayesian learning via Langevin MC under DP constraints. In this subsection, we formalize the performance criteria of interest and elaborate on the role of channel noise in achieving them. 
\subsubsection{Approximation Error}
Denoting  as $p^{[s]}(\bm \theta)$  the distribution of the sample  ${\bm \theta}^{[s]}$ at the $s$-th iteration, the quality of the sample is measured by the 2-Wasserstein distance between $p^{[s]}(\bm \theta)$  and the target global posterior $p({\bm\theta}|\cD)$ in \eqref{eq: posterior}. This is defined as \cite{dalalyan2017further,chatterji2018theory}
\begin{align} \label{eq: ws distance}
W_2\big(p^{[s]}(\bm \theta),p({\bm\theta}|\cD)\big)=\left(\inf_{p({\bm \theta}, {\bm \theta}')} \int_{\mathbb{R}^d\times\mathbb{R}^d} \l\|{\bm \theta}-{\bm \theta}'\r\|_2^2 p({\bm \theta}, {\bm \theta}') \mathrm{d} {\bm \theta} \mathrm{d} {\bm \theta}' \right)^{1/2},
\end{align}
where joint distribution $p({\bm \theta}, {\bm \theta}')$ is constrained to have marginals $p^{[s]}({\bm \theta})$  and $p({\bm \theta}'|\cD)$. The Wasserstein distance is a standard measure of discrepancy between two distributions, and it is routinely used for the analysis of MC algorithms (see, e.g., \cite{dalalyan2017further, chatterji2018theory}). It has some useful properties with respect to other measures such as the Kullback-Leibler divergence and total variation distance. For instance, it is well defined and informative even when the two distributions have disjoint supports \cite{arjovsky2017wasserstein}. 

\subsubsection{Differential Privacy} We consider a ``honest-but-curious" edge server that may attempt to infer information about local data sets from the received signals $\by$. 
 We impose the standard $(\epsilon,\delta)$-DP metric with some  $\epsilon>0$ and $\delta\!\in\![0,1)$, for each device $k$. This amounts to the inequalities
\begin{equation}\label{eq: def dp}
P (\{\by^{[s]}\}_{s=1}^{S} | \{\cup_{i\neq k}\cD_i\} \cup \cD_k') \leq \exp(\epsilon)P(\{\by^{[s]}\}_{s=1}^S  | \{\cup_{i\neq k}\cD_i\} \cup \cD_k'')+\delta, 
\end{equation}  
for all $k=1,\dots,K$,  where $S$ is the number of communication rounds, and $P (\by^{[s]} | \cD)$ represents the distribution of the received signal \eqref{eq: rev signal} conditioned on the global data set $\cD$. Condition \eqref{eq: def dp} must hold for any two possible neighboring data sets $\cD_k'$ and $\cD_k''$ differing only by one sample, i.e., $\|\cD_k'-\cD_k''\|_1=1$, and for all data sets $\{\cD_i\}_{i\neq k}$ of other devices. 

\subsubsection{On the Role of Channel Noise} Without the additive random term in ${\bm \xi}^{[s+1]}$ in \eqref{eq: LMC}, LMC coincides with standard gradient descent (GD) for frequentist learning, which was studied in \cite{liu2020privacy} in a federated setting under the DP constraints \eqref{eq: def dp}. 
 For convex models, the additive channel noise on the uplink channel \eqref{eq: rev signal} is  harmful to the convergence rate of GD with no privacy constraints \cite{liu2020privacy}, although it can improve the generalization performance \cite{yang2021revisiting}.  
 In \cite{liu2020privacy,seif2020wireless,koda2020differentially}, it was shown that channel noise can be repurposed to ensure the DP constraints~\eqref{eq: def dp} for values of $(\epsilon,\delta)$ that depend on the SNR level.  

In this paper, we observe that, in Bayesian learning via MC, by \eqref{eq: LMC},  noise can potentially contribute not only to privacy but also to MC sampling, without necessarily compromising the learning performance.  This idea was first introduced in \cite{liu2021channel} for one-shot MC sampling methods, and is studied here for the first time for iterative schemes. 
Specifically, we investigate the joint role of channel noise as a contributor to MC sampling and as a privacy-inducing mechanism. This interplay between MC sampling and DP was analyzed under ideal communication for centralized learning in \cite{wang2015privacy}, and the impact of channel noise in distributed settings is studied here for the first time. 


\subsection{Assumptions on the Log-Likelihood}
Finally, we list several standard assumptions we make on the global cost function $f(\bm \theta)$ in \eqref{eq: gb func} and on its gradient.

\begin{assumption}[Smoothness]\label{assumption: Lipschitz}\emph{
The global cost function $f(\bm \theta)$ is smooth with constant $L>0$, that is, it is continuously differentiable and the gradient $\nabla f(\bm \theta)$  is Lipschitz continuous with constant $L$, i.e.,   
\begin{align}\label{eq: Lipschitz} 
\|\nabla f({\bm \theta})-\nabla f({\bm \theta}')\| \leq L \| {\bm \theta}-{\bm \theta}'\|,  \quad \text{for all } {\bm \theta},{\bm \theta}' \in \mathbb{R}^m. 
\end{align} 
}
\end{assumption}

\begin{assumption}[Strong Convexity]\label{assumption: Strong Convexity}\emph{The global cost function $f(\bm \theta)$ is strongly convex, i.e., the following inequality holds for some constant $\mu>0$ 
 \begin{equation}\label{eq: PL ineq}
\l[\nabla f(\bm \theta)-\nabla f({\bm \theta}')\r]^{\sf T} ({\bm \theta}-{\bm \theta}') \geq \mu \|{\bm \theta}-{\bm \theta}'\|^2, \quad \text{for all } {\bm \theta},{\bm \theta}' \in \mathbb{R}^m. 
 \end{equation}
 }
\end{assumption}

 Assumptions \ref{assumption: Lipschitz} and \ref{assumption: Strong Convexity} imply the following inequality \cite[Lemma 3.11]{bubeck2014convex}    
  \begin{align} \label{eq: smooth and sc 2}
 \l[\nabla f(\bm \theta)-\nabla f({\bm \theta}')\r]^{\sf T} ({\bm \theta}-{\bm \theta}')  \geq \frac{\mu L}{\mu+L} \|{\bm \theta}-{\bm \theta}'\|^2  + \frac{1}{\mu+L} \l\|\nabla f(\bm \theta)-\nabla f({\bm \theta}') \r\|^2 ,  \nn\\   
  \text{for all } {\bm \theta},{\bm \theta}' \in \mathbb{R}^m.  
 \end{align}

 \begin{assumption}[Bounded Local Gradient]\label{assumption: BLL}\emph{The local gradient is bounded as     
  \begin{align}\label{eq: BV}
\l\|\nabla f_k({\bm \theta}^{[s]}) \r \| \leq \ell ,\quad  \text{for all }  k, s,
 \end{align} 
 and some constant $\ell>0$. 
 }
\end{assumption} 
 This last assumption is  essential to ensure DP requirements \cite{chen2020understanding,wang2015privacy}. One can choose the parameter of $\ell$ as the maximum value of $\|\nabla f_k({\bm \theta}^{[s]})  \| $, or, in practice, clipping the gradient as  $\overline{\nabla f_k}({\bm \theta}^{[s]}) \!= \! \min \{1, {\ell}/{\|{\nabla f_k}({\bm \theta}^{[s]})\|}\}{\nabla f_k}({\bm \theta}^{[s]})$. We will not account for the impact of  clipping in the analysis.


\section{Wireless Federated Langevin Monte Carlo (WFLMC)}\label{sec: WFLMC}
 In this section, we introduce wireless federated LMC (WFLMC). Specifically, we first present signal design and scheduling protocol, and then detail the rationale behind the proposed approach. The following sections will focus on the analysis of WFLMC. 
 
 \subsection{Signal Design and Scheduling Protocol}
 In each communication block, all devices transmit their local gradients simultaneously by using uncoded transmission of the form 
 \begin{align}\label{eq: tx signal}  
\text{(Transmitted Signal)}\quad \bx_k^{[s]}&= \alpha_k^{[s]}   \nabla f_k\big({\bm \theta}^{[s-1]}\big), 
\end{align}
for some power control parameter $\alpha_k^{[s]}$.
Specifically, we implement truncated channel inversion as in \cite{zhu2019broadband,zhu2020one,zhang2021turning}, whereby the power control parameter is selected as 
\begin{align}\label{def: pc}
\alpha_k^{[s]} = \left\{\begin{matrix}
{\alpha^{[s]} }/{h_k^{[s]}} , \quad   & \text{if} \ |h_k^{[s]}| \geq g^{[s]} ,\\ 
0, \quad & \text{if} \  |h_k^{[s]}|<g^{[s]},
\end{matrix}\right.
\end{align}
where  gain parameter $\alpha^{[s]}>0$ and threshold $ g^{[s]}>0$ are parameters to be optimized.  Accordingly, a device $k$ transmits only if its channel is large enough, i.e., $|h_k^{[s]}| \geq g^{[s]}$. {\color{black}We note that this channel-aware scheduling aims to avoid deep fading channels. Other
alternative scheduling policies may achieve better performance by considering also the importance of local data set, such as \cite{ren2020scheduling,liu2020data}. } 
We denote as $K_a^{[s]}$ the number of transmitting devices at round $s$, and $\mathcal{K}_a^{[s]}$ the corresponding set of transmitting~devices.  

To estimate the LMC update term \eqref{eq: LMC update term} in \eqref{eq: LMC}, the received signal \eqref{eq: rev signal}  is scaled and a Gaussian random vector is added. Specifically,  \emph{wireless federated LMC (WFLMC) update} is proposed  as 
\begin{align}\label{eq: LMC wireless}
{(\text {WFLMC})} \quad {\bm \theta}^{[s]}={\bm \theta}^{[s-1]} - \eta \frac{K}{\alpha^{[s]}K_a^{[s]}} \by^{[s]} +  \sqrt{\beta^{[s]}} {\bq}^{[s]},   
\end{align}
where the added noise 
${\bq}^{[s]}\sim \cN(0, \bI_m)$ is independent of all other variables. The variance $\beta^{[s]}$ of the noise term $\bq^{[s]}$ is chosen as a function of the learning rate $\eta$, channel noise $N_0$, number of active users $K_a^{[s]}$, and gain $\alpha^{[s]}$ as  
\begin{align}\label{eq: noise control} 
\beta^{[s]} = \max \bigg\{ 0, 2\eta- \frac{\eta^2N_0 K^2}{(\alpha^{[s]}K_a^{[s]})^2} \bigg\}.
\end{align}

\subsection{Understanding WFLMC}

{\color{black}To see the rationale behind the design \eqref{eq: LMC wireless}--\eqref{eq: noise control}, let us plug \eqref{eq: rev signal} and \eqref{eq: tx signal}--\eqref{def: pc} into \eqref{eq: LMC wireless}
to rewrite the WLMC update as 
\begin{align}
{\bm \theta}^{[s]}&={\bm \theta}^{[s-1]} - \eta \frac{K}{K_a^{[s]}} \sum_{k\in \cK_a^{[s]} } \nabla f_k\big({\bm \theta}^{[s-1]}\big)  \underbrace{-   \eta\frac{K}{\alpha^{[s]}K_a^{[s]}}\bz^{[s]}+ \sqrt{\beta^{[s]}} {\bq}^{[s]}}_{\text{effective noise } \widetilde{\bz}^{[s]}}\label{eq: LMC wireless rw} \\
&={\bm \theta}^{[s-1]} - \eta \bigg[\underbrace{\frac{K}{K_a^{[s]}} \sum_{k\in \cK_a^{[s]} } \nabla f_k\big({\bm \theta}^{[s-1]}\big) + \sqrt{\widetilde{\beta}^{(s)}}\Delta^{[s]}}_{\widehat {\nabla f}({\bm \theta}^{(s-1)})} \bigg]+\sqrt{2\eta} \widehat{\bm \xi}^{[s]}, \label{eq: LMC wireless gd noise} \end{align} 
where $\widehat{\bm \xi}^{[s]}\! \sim \! \cN(0, \bI_m)$ and $\Delta^{[s]} \!\sim \! \cN(0, \bI_m)$ are the i.i.d. sequences for $s=1,2,\dots$, and we~define 
\begin{align}\label{eq: rem noise}  
\widetilde{\beta}^{[s]}=\max \bigg\{ 0,   \frac{N_0K^2}{(\alpha^{[s] } K_a^{[s]})^2} -\frac{2}{\eta} \bigg\}.  
\end{align}  }

By \eqref{eq: LMC wireless rw},  the power scaling \eqref{def: pc} ensures that the gradients of the active devices sum at the receiver, and the scaling by $K/(\alpha^{[s]}K_a^{[s]})$ in \eqref{eq: LMC wireless} of the received signal compensates for the resulting multiplier of the sum-gradient. The term $ {K}/{K_a^{[s]}} \sum_{k\in \cK_a^{[s]} } \nabla f_k\big({\bm \theta}^{[s-1]}\big) $ is an empirical estimate of the gradient $ \sum_{k=1}^K\nabla f_k\big({\bm \theta}^{[s-1]}\big)$ in \eqref{eq: LMC} which is exact if $K_a^{[s]}=K$. 

Based on the discussion so far, in order for \eqref{eq: LMC wireless rw} to be an estimate of the LMC update \eqref{eq: LMC}, we should ideally ensure that the variance of the effective noise term $\widetilde{\bz}^{[s]}$ be equal to $2\eta$. However, the power of this term, namely, 
\begin{align}
\widetilde{\sigma}_z^2= \eta^2K^2/(\alpha^{[s]} K_a^{[s]})^2+ \beta^{[s]} 
\end{align} 
can only be partially controlled through power gain $\alpha^{[s]}$ and added noise variance $\beta^{[s]}$. In particular, we can always choose the added noise variance $\beta^{[s]}$ such that the variance $\widetilde{\sigma}_z^2$ is no smaller than $2\eta$. This condition is ensured by \eqref{eq: noise control}. In particular with \eqref{eq: noise control}, the effective noise in \eqref{eq: LMC wireless rw}  can be decomposed into two parts as indicated in \eqref{eq: LMC wireless gd noise}: 

\noindent 1) \emph{LMC noise:} The term $\sqrt{2\eta}\widehat{\bm \xi}^{[s]}$ with $\widehat{\bm \xi}^{[s]}\sim \cN(0, \bI_m)$ serves the role of LMC noise with variance $2\eta$; 

\noindent  2) \emph{Gradient estimation noise:} The remaining noise, denoted as $-\eta \sqrt{\widetilde{\beta}^{(s)}}\Delta^{[s]}$ with $\Delta^{[s]} \sim \cN(0, \bI_m)$, acts as a perturbation on the gradient estimate with variance $\widetilde{\beta}^{[s]}$ in \eqref{eq: rem noise}. Note that the variance $\widetilde{\beta}^{[s]}$ of estimate noise is non-zero if the channel noise power $N_0$ is large. 

 \section{Convergence and Privacy Analysis of WFLMC} \label{sec: analysis WFLMC}
 In this section, we focus on the performance analysis of WFLMC in terms of (i) convergence through the 2-Wasserstein distance as defined in \eqref{eq: ws distance}; and (ii) privacy under the DP criterion \eqref{eq: def dp}. In this section, we assume that the  sequence of power gain and scheduling threshold parameters $\{\alpha^{[s]}, g^{[s]}\}$ is fixed, and the results are given for an arbitrary sequence $ \{h_k^{[s]}\}$  of~channels. 

 \subsection{Convergence Analysis}\label{sec: convergence analysis}
 We now study the distribution $p^{[s]}({\bm \theta})$ of the sample ${\bm \theta}^{[s]}$ produced by WFLMC via the updates \eqref{eq: LMC wireless}. We recall that the goal of LMC is to produce samples distributed according to the global posterior $p(\cD|{\bm \theta})$.  As discussed in Sec. \ref{sec: metrics}, we measure the approximation error via 2-Wasserstein distance \eqref{eq: ws distance}. 
 
 There are three main contributions to the discrepancy between the distribution of $p^{[s]}({\bm \theta})$ and the target posterior $p({\bm \theta}|\cD)$: 
\begin{enumerate} 
\item the initial discrepancy, which is measured by the 2-Wasserstein distance  $W_2\big(p^{[0]}(\bm \theta),p({\bm\theta}|\cD)\big)$   between the initial distribution $p^{[0]}(\bm \theta)$ and the posterior $p({\bm\theta}|\cD)$;
\item the gradient error in \eqref{eq: LMC wireless gd noise}, namely  
\begin{align}\label{eq: grad error}
\text{(Gradient error)} \quad N_g^{[s]}={\nabla f}({\bm \theta}^{[s-1]})- \widehat {\nabla f}({\bm \theta}^{[s-1]}),
\end{align}
 which is caused by the excess noise $\sqrt{\widetilde{\beta}^{(s)}}\Delta^{[s]}$ and by the fact that only a subset of $K_a^{[s]}$ devices is active; 
\item  and the discretization error due to the approximation \eqref{eq: int. discre},  which is given as 
\begin{align}
\text{(Discretization error)}  \quad N_d^{[s]}= \int_{(s-1)\eta}^{s \eta}  \nabla f(\bar{\bm \theta}^{(t)}) - \nabla f(  \bar{\bm \theta}^{((s-1) \eta )} ) \mathrm{d}t.
\end{align}
\end{enumerate}
We now bound the last two terms separately, and then use these results to bound  square of the desired 2-Wasserstein distance $W_2\big(p^{[s]}(\bm \theta),p({\bm\theta}|\cD)\big)^2$ in \eqref{eq: ws distance}.


  \begin{lemma}[Upper Bound on the Gradient Error]\label{lemma: gradient error}
\emph{Under Assumption \ref{assumption: BLL}, for any $s$-th  communication round, the average power of the gradient error is bounded as 
\begin{align}\label{eq: gradient error}
\E\l[\| N_g^{[s]} \|^2 \r] \leq 4\ell^2 \big(K-K_a^{[s]}\big)^2+{\widetilde{\beta}^{[s]}}, 
\end{align}
where the expectation is taken with respect to the excess noise  $\sqrt{\widetilde{\beta}^{[s]} }\Delta^{[s]}$ and  the variance ${\widetilde{\beta}^{[s]}}$ is defined in \eqref{eq: rem noise}.
}
\end{lemma}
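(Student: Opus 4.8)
The plan is to decompose the gradient error in \eqref{eq: grad error} into a deterministic part, due to the fact that only the subset $\cK_a^{[s]}$ of devices transmits, and a stochastic part given by the injected Gaussian perturbation $-\sqrt{\widetilde\beta^{[s]}}\Delta^{[s]}$ appearing in \eqref{eq: LMC wireless gd noise}. Recalling $\nabla f(\bm\theta^{[s-1]})=\sum_{k=1}^K\nabla f_k(\bm\theta^{[s-1]})$ together with the expression for $\widehat{\nabla f}(\bm\theta^{[s-1]})$ in \eqref{eq: LMC wireless gd noise}, I would write $N_g^{[s]}=e^{[s]}-\sqrt{\widetilde\beta^{[s]}}\Delta^{[s]}$, where
\begin{align}
e^{[s]} \bydef \sum_{k=1}^K \nabla f_k(\bm\theta^{[s-1]}) - \frac{K}{K_a^{[s]}}\sum_{k\in\cK_a^{[s]}}\nabla f_k(\bm\theta^{[s-1]}) \nonumber
\end{align}
is the deterministic scheduling error conditioned on $\bm\theta^{[s-1]}$, the data, and the channels.

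First, since the expectation is taken only over $\Delta^{[s]}\sim\cN(0,\bI_m)$, which is zero-mean and independent of $e^{[s]}$, the cross term vanishes and the mean-squared error splits additively as $\E[\|N_g^{[s]}\|^2]=\|e^{[s]}\|^2+\widetilde\beta^{[s]}$, the second summand being the variance of the perturbation. This reduces the claim to the deterministic bound $\|e^{[s]}\|^2\leq 4\ell^2(K-K_a^{[s]})^2$.

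To establish this bound, I would split the full sum as $\sum_{k=1}^K=\sum_{k\in\cK_a^{[s]}}+\sum_{k\notin\cK_a^{[s]}}$ and regroup the active-device terms, obtaining
\begin{align}
e^{[s]} = \frac{K_a^{[s]}-K}{K_a^{[s]}}\sum_{k\in\cK_a^{[s]}}\nabla f_k(\bm\theta^{[s-1]}) + \sum_{k\notin\cK_a^{[s]}}\nabla f_k(\bm\theta^{[s-1]}). \nonumber
\end{align}
Applying the triangle inequality together with the uniform bound $\|\nabla f_k(\bm\theta^{[s]})\|\leq\ell$ from Assumption \ref{assumption: BLL}, the first group of $K_a^{[s]}$ gradients contributes at most $\frac{K-K_a^{[s]}}{K_a^{[s]}}\,K_a^{[s]}\ell=(K-K_a^{[s]})\ell$, while the second group of $K-K_a^{[s]}$ gradients contributes at most $(K-K_a^{[s]})\ell$. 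Adding the two gives $\|e^{[s]}\|\leq 2(K-K_a^{[s]})\ell$, hence $\|e^{[s]}\|^2\leq 4\ell^2(K-K_a^{[s]})^2$, and combining with the variance term yields \eqref{eq: gradient error}.

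The main obstacle is to treat the channel-inversion rescaling $K/K_a^{[s]}$ carefully: a bound that does not pair the over-weighted active terms with the omitted inactive terms would fail to produce the clean $(K-K_a^{[s]})^2$ dependence. The regrouping above, which isolates the factor $(K-K_a^{[s]})/K_a^{[s]}$ multiplying exactly $K_a^{[s]}$ bounded gradients, is precisely what lets the two contributions combine into the stated factor of $2$; the remaining steps are a routine application of the triangle inequality and of the independence of the injected noise.
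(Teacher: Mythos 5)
Your proposal is correct and follows essentially the same route as the paper's proof: the same split of the mean-squared error into the scheduling error plus the variance $\widetilde{\beta}^{[s]}$ via the zero-mean, independent excess noise, the same regrouping that isolates the factor $(K-K_a^{[s]})/K_a^{[s]}$ over the active devices, and the same triangle-inequality plus Assumption \ref{assumption: BLL} argument yielding $2\ell(K-K_a^{[s]})$ before squaring. No gaps.
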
 
\proof See  Appendix \ref{proof: gradient error}.

The error bound \eqref{eq: gradient error} is comprised of two parts. The first term is the estimation error due to scheduling, which is zero  if all the devices transmit in communication rounds, i.e., if $K_a^{[s]}=K$.  
The second term is the variance $\widetilde{\beta}^{[s]}$ of the excess noise $\sqrt{\widetilde{\beta}^{[s]} }\Delta^{[s]}$ in \eqref{eq: rem noise}.

The discretization error is constant for 
 for all communication round $s=1,2,\dots$, and can be bounded by following \cite[Lemma 3]{dalalyan2017further} as detailed in the next lemma. 
  \begin{lemma}[Upper Bound on the Discretization Error]\label{lemma: discr error}
\emph{For any  communication round $s$, the discretization error is invariant, and it is upper bounded as 
\begin{align}\label{eq: discr error}
\E\l[\| N_d^{[s]} \|^2 \r] \leq  \frac{\eta^4L^3 m}{3}+ \eta^3 L^2m, 
\end{align}
where the average is computed with respect to the joint distribution of the LDP \eqref{eq: LD} and  the~LMC~\eqref{eq: LMC}. 
}
\end{lemma}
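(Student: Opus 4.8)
The plan is to fix the left endpoint $t_0=(s-1)\eta$ and to exploit the fact that, once the Langevin diffusion \eqref{eq: LD} is run at its invariant distribution, the marginal law of $\bar{\bm\theta}^{(t)}$ equals the posterior $p^*\propto\exp[-f]$ for every $t$. This stationarity is what makes $\E[\|N_d^{[s]}\|^2]$ independent of $s$, establishing the claimed invariance. First I would apply Cauchy--Schwarz to the time integral defining $N_d^{[s]}$, giving $\|N_d^{[s]}\|^2\le \eta\int_{t_0}^{t_0+\eta}\|\nabla f(\bar{\bm\theta}^{(t)})-\nabla f(\bar{\bm\theta}^{(t_0)})\|^2\,\mathrm{d}t$, and then invoke the $L$-Lipschitz property of $\nabla f$ (Assumption \ref{assumption: Lipschitz}) to replace the gradient difference by $L^2\|\bar{\bm\theta}^{(t)}-\bar{\bm\theta}^{(t_0)}\|^2$. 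Taking expectations reduces everything to controlling the mean-squared displacement $\E[\|\bar{\bm\theta}^{(t)}-\bar{\bm\theta}^{(t_0)}\|^2]$ of the diffusion over a window of length at most $\eta$.

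To bound the displacement I would write \eqref{eq: LD} in integral form over $[t_0,t]$, so that $\bar{\bm\theta}^{(t)}-\bar{\bm\theta}^{(t_0)}=-\int_{t_0}^t\nabla f(\bar{\bm\theta}^{(u)})\,\mathrm{d}u+\sqrt{2}(\boldsymbol{B}^{(t)}-\boldsymbol{B}^{(t_0)})$. The Brownian contribution has expected squared norm $2(t-t_0)m$, while the drift term is handled by Cauchy--Schwarz as $(t-t_0)\int_{t_0}^t\E[\|\nabla f(\bar{\bm\theta}^{(u)})\|^2]\,\mathrm{d}u$. The drift--diffusion cross term $-2\sqrt{2}\,\E\langle\int_{t_0}^t\nabla f\,\mathrm{d}u,\boldsymbol{B}^{(t)}-\boldsymbol{B}^{(t_0)}\rangle$ is non-positive: for convex $f$ (Assumption \ref{assumption: Strong Convexity}) the gradient at $\bar{\bm\theta}^{(u)}$ is positively correlated with the Brownian increment $\boldsymbol{B}^{(u)}-\boldsymbol{B}^{(t_0)}$ that drove the excursion, so this term can be dropped. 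Retaining it is exactly what removes the spurious factor of two coming from a naive $\|a+b\|^2\le 2\|a\|^2+2\|b\|^2$ split and yields the sharp constants $\tfrac13$ and $1$ in \eqref{eq: discr error}.

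The key quantitative ingredient is the stationary bound $\E_{p^*}[\|\nabla f(\bm\theta)\|^2]\le mL$. I would obtain it by integration by parts against the Gibbs density $p^*\propto e^{-f}$: using $\nabla e^{-f}=-\nabla f\,e^{-f}$ together with the divergence theorem gives the identity $\E_{p^*}[\|\nabla f\|^2]=\E_{p^*}[\Delta f]=\E_{p^*}[\tr(\nabla^2 f)]$, and $L$-smoothness (Assumption \ref{assumption: Lipschitz}) forces $\nabla^2 f\preceq L\bI_m$, hence $\tr(\nabla^2 f)\le mL$. Substituting $\E[\|\nabla f(\bar{\bm\theta}^{(u)})\|^2]\le mL$ then yields $\E[\|\bar{\bm\theta}^{(t)}-\bar{\bm\theta}^{(t_0)}\|^2]\le (t-t_0)^2 mL+2(t-t_0)m$, and integrating $\eta L^2$ times this bound over $t\in[t_0,t_0+\eta]$ produces $\tfrac{\eta^4 L^3 m}{3}+\eta^3 L^2 m$, as claimed.

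I expect the main obstacle to be the rigorous justification that the drift--diffusion cross term is non-positive, rather than merely bounding it crudely and settling for a looser constant; this is the delicate point treated in \cite[Lemma 3]{dalalyan2017further}, whose argument I would follow verbatim. By contrast, the integration-by-parts identity is clean once stationarity is invoked, and the Lipschitz and Cauchy--Schwarz reductions are entirely routine.
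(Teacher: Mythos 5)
Your proposal is correct and follows essentially the same route as the paper's proof: stationarity gives the invariance across $s$, Cauchy--Schwarz plus $L$-Lipschitzness reduces the problem to the mean-squared displacement of the diffusion, the stationary bound $\E[\|\nabla f\|^2]\le mL$ (which the paper simply cites as \cite[Lemma 2]{dalalyan2017further} rather than re-deriving it by integration by parts, as you do) controls the drift, and the same final integrations yield the identical constants $\tfrac{\eta^4L^3m}{3}+\eta^3L^2m$. The one point of divergence is the drift--Brownian cross term: you explicitly argue it is non-positive under convexity, whereas the paper writes the expansion of the squared norm as an equality, implicitly setting the cross term to zero—so your treatment is, if anything, the more careful one there (note, though, that \cite[Lemma 3]{dalalyan2017further} itself sidesteps the cross term entirely by working with $L^2$ norms and Minkowski's inequality, so the ``verbatim'' argument you hope to borrow for that step does not actually appear there).
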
 
\proof The proof is detailed in Appendix \ref{proof: discr error}. 

{\color{black}Lemma \ref{lemma: discr error} shows that the discretization error grows with model dimension $m$, with the learning rate $\eta$, and with the smoothness constant $L$ of the global cost function $f(\bm \theta)$. Note that a larger $L$ implies a less smooth function $f(\bm \theta)$. }
 
We now leverage Lemma \ref{lemma: gradient error} and Lemma \ref{lemma: discr error} to bound the square of 2-Wasserstein distance. 

 \begin{proposition}[Bound for 2-Wasserstein Distance] \label{prop: convergence}\emph{For a learning rate $0<\eta\leq 2/L$, define 
 \begin{align}\label{def: gamma}
\gamma= \left \{ \begin{matrix}
1-\eta \mu , \quad   &0<\eta\leq 2/(\mu+L) ,\\ 
 \eta L-1, \quad & 2/(\mu+L)\leq \eta \leq 2/L.
\end{matrix}\right.
\end{align}
 Under Assumptions~\ref{assumption: Lipschitz}, \ref{assumption: Strong Convexity} and  \ref{assumption: BLL}, after any number $s'$ of iterations, the 2-Wasserstein distance between the sample distribution produced by WFLMC and the global posterior is upper bounded~as
\begin{align} \label{eq: result convergence}
W_2\big(p^{[s']}(\bm \theta),p({\bm\theta}|\cD)\big)^2  \leq \Big(\frac{1+\gamma}{2}\Big)^{2s'}W_2\big(p^{[0]}(\bm \theta),p({\bm\theta}|\cD)\big)^2 + \sum_{s=1}^{s'} \Big(\frac{1+\gamma}{2}\Big)^{2(s'-s)}  \frac{2(1+\gamma)}{1-\gamma} \nn\\
\times \bigg[ \frac{\eta^4L^3 m}{3}+ \eta^3 L^2m  + 4\eta^2\ell^2\big(K-K_a^{[s]}\big)^2 + \eta^2\widetilde{\beta}^{[s]} \bigg]  \\
\overset{\Delta}=\widetilde{W}_2\big(p^{[s']}(\bm \theta),p({\bm\theta}|\cD)\big)^2.
  \end{align}
   }
\end{proposition}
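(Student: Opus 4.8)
The plan is to bound $W_2$ through a \emph{synchronous coupling} between the WFLMC chain and the continuous-time LDP \eqref{eq: LD} started at its invariant law. Concretely, I would run the diffusion $\{\bar{\bm \theta}^{(t)}\}$ from $\bar{\bm \theta}^{(0)}\sim p({\bm\theta}|\cD)$, so that $\bar{\bm \theta}^{(s\eta)}\sim p({\bm\theta}|\cD)$ for every $s$, and drive it with the \emph{same} Brownian increments that generate the LMC noise $\sqrt{2\eta}\widehat{\bm \xi}^{[s]}$ in \eqref{eq: LMC wireless gd noise}, i.e. I identify $\widehat{\bm \xi}^{[s]}=\bm \xi^{[s]}$. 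Starting from an optimal coupling of $p^{[0]}$ and $p({\bm\theta}|\cD)$ at $s=0$, the pair $({\bm \theta}^{[s']},\bar{\bm \theta}^{(s'\eta)})$ is then an admissible coupling of $p^{[s']}$ and the posterior, whence $W_2\big(p^{[s']}(\bm \theta),p({\bm\theta}|\cD)\big)^2\le \E\big[\|\Delta_{s'}\|^2\big]$ with $\Delta_s:={\bm \theta}^{[s]}-\bar{\bm \theta}^{(s\eta)}$. It then suffices to control $\E[\|\Delta_s\|^2]$ by a one-step recursion.

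Writing $\Phi_\eta({\bm \theta}):={\bm \theta}-\eta\nabla f({\bm \theta})$ for the gradient-descent map, substituting $\widehat{\nabla f}=\nabla f-N_g^{[s]}$ from \eqref{eq: grad error} into \eqref{eq: LMC wireless gd noise}, and comparing with the integrated LDP \eqref{eq: diff LD}, the shared noise cancels and the recursion reads
\[ \Delta_s = \big[\Phi_\eta({\bm \theta}^{[s-1]})-\Phi_\eta(\bar{\bm \theta}^{((s-1)\eta)})\big] + \underbrace{\eta N_g^{[s]}+N_d^{[s]}}_{=:E_s}. \]
The deterministic core is that $\Phi_\eta$ is a $\gamma$-contraction under Assumptions \ref{assumption: Lipschitz}--\ref{assumption: Strong Convexity}: expanding $\|\Phi_\eta({\bm \theta})-\Phi_\eta({\bm \theta}')\|^2$ and invoking the co-coercivity inequality \eqref{eq: smooth and sc 2} leaves a term in $\|{\bm \theta}-{\bm \theta}'\|^2$ and a term in $\|\nabla f({\bm \theta})-\nabla f({\bm \theta}')\|^2$ whose coefficient $\eta^2-2\eta/(\mu+L)$ changes sign at $\eta=2/(\mu+L)$. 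Lower-bounding $\|\nabla f({\bm \theta})-\nabla f({\bm \theta}')\|$ by $\mu\|{\bm \theta}-{\bm \theta}'\|$ in the first regime and upper-bounding it by $L\|{\bm \theta}-{\bm \theta}'\|$ in the second reproduces exactly the two branches $\gamma=1-\eta\mu$ and $\gamma=\eta L-1$ of \eqref{def: gamma}, giving $\|\Phi_\eta({\bm \theta}^{[s-1]})-\Phi_\eta(\bar{\bm \theta}^{((s-1)\eta)})\|\le\gamma\|\Delta_{s-1}\|$.

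To pass to $\E[\|\Delta_s\|^2]$ I would expand the square and handle the cross term by Young's inequality, $2\,u^{\sf T} E_s\le \rho\|u\|^2+\rho^{-1}\|E_s\|^2$, with the choice $\rho=(1-\gamma)/(2\gamma)$. Combined with the contraction this gives $(1+\rho)\gamma^2=\gamma(1+\gamma)/2\le\big((1+\gamma)/2\big)^2$ as the coefficient of $\E[\|\Delta_{s-1}\|^2]$, while $(1+\rho^{-1})=(1+\gamma)/(1-\gamma)$ multiplies $\E[\|E_s\|^2]$; the split $\E[\|E_s\|^2]\le 2\eta^2\E[\|N_g^{[s]}\|^2]+2\E[\|N_d^{[s]}\|^2]$ then produces the prefactor $2(1+\gamma)/(1-\gamma)$. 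Here the zero-mean excess-noise part $\sqrt{\widetilde{\beta}^{[s]}}\Delta^{[s]}$ of $N_g^{[s]}$ is independent of $\Delta_{s-1}$ and of $\widehat{\bm \xi}^{[s]}$, so it drops out of the cross term and only the scheduling bias and $N_d^{[s]}$ require the Young step. Inserting the bounds of Lemma \ref{lemma: gradient error} and Lemma \ref{lemma: discr error} yields the bracketed per-step error, and unrolling the linear recursion $\E[\|\Delta_s\|^2]\le\big((1+\gamma)/2\big)^2\E[\|\Delta_{s-1}\|^2]+d_s$ from $s=1$ to $s'$ — a geometric sum of the forcing terms plus the decaying initial contribution $\E[\|\Delta_0\|^2]=W_2(p^{[0]}(\bm \theta),p({\bm\theta}|\cD))^2$ — reproduces \eqref{eq: result convergence} verbatim.

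The main obstacle is precisely this cross term: unlike the idealized LMC analysis of \cite{dalalyan2017further}, $E_s$ is \emph{not} a centered noise independent of the state, since both the subsampling/scheduling error inside $N_g^{[s]}$ and the discretization error $N_d^{[s]}$ are correlated with $\Delta_{s-1}$. This correlation forces the Young-inequality split and hence the slightly loosened contraction factor $(1+\gamma)/2$ in place of the tight $\gamma$. Some care is also needed to justify treating the injected LMC noise and the excess gradient noise as independent components: they need only agree with \eqref{eq: LMC wireless rw} \emph{in distribution}, which is all that $W_2$ depends on. The remaining ingredients — the contraction constant and the two error lemmas — are then routine.
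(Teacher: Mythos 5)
Your proposal is correct and takes essentially the same route as the paper's proof: a synchronous coupling of WFLMC with the stationary Langevin diffusion (sharing the Brownian increments), the contraction of the gradient-descent map obtained from \eqref{eq: smooth and sc 2} (which the paper invokes via Lemma 1 of \cite{dalalyan2017further}), a Young-inequality treatment of the cross term, and Lemmas \ref{lemma: gradient error} and \ref{lemma: discr error} inserted into an unrolled geometric recursion. The only cosmetic difference is the Young parameter: you take $\rho=(1-\gamma)/(2\gamma)$, which lands on the factor $2(1+\gamma)/(1-\gamma)$ directly, whereas the paper takes $\tau=\left(\frac{1+\gamma}{2\gamma}\right)^2-1$ and then loosens $\frac{2(1+\gamma)^2}{(1-\gamma)(1+3\gamma)}$ to the same constant using $\frac{1+\gamma}{1+3\gamma}<1$.
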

\begin{proof} The proof follows from the upper bound (see, e.g., \cite{dalalyan2017further})
 \begin{align} \label{eq: sq dis}
W_2\big(p^{[s]}(\bm \theta),p({\bm\theta}|\cD)\big)^2 \leq {\E\l[\l\| {\bm \theta}^{[s]}-  \bar{\bm \theta}^{(s\eta)}\r\|^2\r]},
 \end{align}
 where $\bar{\bm \theta}^{(s\eta)}$ is obtained from the LDP \eqref{eq: LD}, while ${\bm \theta}^{[s]}$ is the sample produced by the LMC \eqref{eq: LMC} with the LMC noise $\sqrt{2\eta}\widehat{ \bm \xi}^{[s]}=\sqrt{2}  \int_{s\eta}^{s\eta+\eta} \mathrm{d}\boldsymbol{B}^{(t)}$. Accordingly, the expectation is taken with respect to the Brownian motion $\boldsymbol{B}^{(t)}$ in \eqref{eq: LD} and over the initial  ${\bm \theta}^{(0)} \sim p^{[0]}({\bm \theta})$. Since, assuming the stationary of the LDP \eqref{eq: LD},  the marginal of the LDP output $\bar{\bm \theta}^{(s)}$ is the target distribution $p({\bm \theta}|\cD)$ for all $t>0$, the bound \eqref{eq: sq dis} follows from the definition \eqref{eq: ws distance} by upper bounding the infimum with the described choice of the joint distribution of ${\bm \theta}^{[s]}$ and $\bar{\bm \theta}^{(s\eta)}$. 

 Using \eqref{eq: LMC wireless gd noise} and \eqref{eq: diff LD}, the right hand side of \eqref{eq: sq dis} can be computed as  
  \begin{align}\label{eq: w2 mid}
& {\E\l[\l\| {\bm \theta}^{[s]}-  \bar{\bm \theta}^{(s\eta)}\r\|^2\r]}= \E\bigg[\Big\| {\bm \theta}^{[s-1]}- \bar{\bm \theta}^{( (s-1)\eta)} - \eta \big[\nabla f({\bm \theta}^{[s-1]} ) - \nabla f(  \bar{\bm \theta}^{((s-1)\eta)} )\big] \nn\\
& \qquad +\eta \big[ \underbrace{ {\nabla f}({\bm \theta}^{[s-1]})- \widehat {\nabla f}({\bm \theta}^{[s-1]}) }_{N_g^{[s]}} \big]+\underbrace{ \int_{(s-1)\eta}^{s \eta}  \nabla f(\bar{\bm \theta}^{(t)}) - \nabla f(  \bar{\bm \theta}^{((s-1) \eta )} )   \mathrm{d} t}_{N_d^{[s]}}   \Big\|^2\bigg]. 
\end{align} 
The rest of the proof involves applying the geometric inequality $2ab\leq \tau a^2+ \tau^{-1}b^2$ for any $\tau>0$, and using Lemma \ref{lemma: gradient error} and Lemma \ref{lemma: discr error} as detailed in Appendix \ref{proof: convergence}. 
\end{proof}

Proposition~\ref{prop: convergence} indicates that the 2-Wasserstein distance depends on the initial discrepancy $W_2\big(p^{[0]}(\bm \theta),p({\bm\theta}|\cD)\big)$, whose contribution decreases exponentially with $s'$; as well as the sum of contributions across the iteration index $s=1,2,\dots,s'$, with each $s$-th error term weighted down by a factor decreasing exponentially with $s'-s$, i.e., as one moves towards earlier iterations. This shows that the disturbances  at later communication rounds are more harmful to the approximation accuracy. Furthermore, the contribution of each iteration $s\geq 1$ is given by the sum of the gradient error bounded in Lemma~\ref{lemma: gradient error},  and the discretization error bounded in Lemma~\ref{lemma: discr error}. 

Another interesting aspect highlighted by the bound \eqref{eq: result convergence} concerns the optimal choice of the learning rate $\eta$. The upper bound \eqref{eq: result convergence} increases with $\gamma$ in \eqref{def: gamma}, and the minimum value of $\gamma$ is attained when $\eta=2/(\mu+L)$. However, a large learning rate $\eta$ causes the gradient error bound \eqref{eq: gradient error}  and the discretization error bound  \eqref{eq: discr error} to increase by Lemma \ref{lemma: gradient error} and Lemma \ref{lemma: discr error}. Thus, the optimal learning rate is in the range of $\eta \in (0, 2/(\mu+L)]$.


 \subsection{Differential Privacy Analysis}\label{sec: DP analysis}
 The WFLMC scheme implicitly implements a Gaussian DP mechanism \cite{dwork2014algorithmic,wang2015privacy}, since the channel noise $\bz^{[s]}$ in \eqref{eq: LMC wireless rw} is added to the disclosed function  $\sum_{k \in \cK_a^{[s]} } h_k^{[s]}\alpha_k^{[s]}  \nabla f_k\big({\bm \theta}^{[s-1]}\big) $. Note that the noise $\bq^{[s]}$ in \eqref{eq: LMC wireless rw}  is added by the edge server, and hence it does not contribute to privacy. Furthermore, while only part of the channel noise $\bz^{[s]}$ is useful for LMC (see \eqref{eq: LMC wireless gd noise}), the entire variance $N_0$ contributes to DP. 
 
 For the Gaussian mechanism, the privacy level $(\epsilon,\delta)$ depends on the sensitivity of the disclosed information and on the variance of the added noise \cite{dwork2014algorithmic,liu2020privacy}. In a manner consistent to the definition \eqref{eq: def dp} of DP, the sensitivity quantifies the maximum change of the disclosed function by replacing a single data point. The sensitivity for  device $k$ is accordingly defined as   
 \begin{align}
{(\text{Sensitivity})} \ \  \chi _k^{[s]}=\max_{\mathcal{D}'_k,\mathcal{D}''_k} \bigg\| h_k^{[s]} \alpha_k^{[s]}   \Big[\nabla f_k'\big({\bm \theta}^{[s-1]}\big)  - \nabla f_k''\big({\bm \theta}^{[s-1]}\big)\Big]\ \bigg\|,
\end{align}
where $\nabla f_k'\big({\bm \theta}^{[s-1]}\big)$ and $\nabla f_k''\big({\bm \theta}^{[s-1]}\big) $ are computed by using data sets $\mathcal{D}'_k$ and $\mathcal{D}''_k$ respectively, and we have $\|\mathcal{D}'_k-\mathcal{D}''_k\|_{1}=1$. By the triangular inequality and Assumption~\ref{assumption: BLL}, we plug in the definition of $\alpha_k^{[s]}$ \eqref{def: pc}  and have the bound
\begin{align}\label{eq: bound def sensitivity}   
\chi_k^{[s]} \leq {\bf 1}\big[|h_k^{[s]}|\geq g^{[s]}\big] \cdot {2 \alpha^{[s]}}\ell.   
\end{align}
Following \cite[Lemma 1]{liu2020privacy}, one can interpret the ratio  $(\chi_k^{[s]} )^2/N_0$ as the privacy loss in each communication rounds. This is formalized in the following proposition.
 \begin{proposition}[DP Guarantees]\label{lemma: privacy constraint}\emph{ For any given sequence of parameters $\{\alpha^{[s]}, g^{[s]}\}$ and channels $\{h_k^{[s]}\}$, after $s$ communication rounds, WFLMC guarantees $(\epsilon,\delta)$-DP   if the following condition is satisfied 
 \begin{align}
\sum_{s=1}^{s'} \frac{{\bf 1}\big[|h_k^{[s]}|\geq g^{[s]}\big] \cdot 2(\alpha^{[s]}   \ell)^2}{N_0} &\leq \l(\sqrt{\epsilon+\l[\mathcal{C}^{-1}\l({1}/{\delta}\r)\r]^2}-\mathcal{C}^{-1}\l({1}/{\delta}\r)\r)^2  \nn \\
&\overset{\Delta}{=}\mathcal{R}_{\sf dp}(\epsilon,\delta), \ \text{for all } k, \label{eq: privacy constraint} 
\end{align}
where $\mathcal{C}^{-1}(x)$ is the inverse function of $\mathcal{C}(x)=\sqrt{\pi}xe^{x^2}$, and  $ {\bf 1}[\cdot]$ is the indicator function.
}
\end{proposition}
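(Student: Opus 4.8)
The plan is to recognize the WFLMC transcript $\{\by^{[s]}\}_{s=1}^{s'}$, for a fixed device $k$, as the output of an adaptively composed Gaussian mechanism in which the per-round channel noise $\bz^{[s]}\sim\cN(0,N_0\bI_m)$ in \eqref{eq: rev signal} plays the role of the privacy-inducing noise, and then to convert the composed privacy loss into an $(\epsilon,\delta)$ guarantee by a Gaussian tail bound that generates the function $\mathcal{C}$. First I would fix two neighboring local data sets $\cD'_k,\cD''_k$ with all other devices' data held fixed, and examine the chain-rule decomposition of the privacy loss $L=\log\!\big(P(\{\by^{[s]}\}|\cD'_k)/P(\{\by^{[s]}\}|\cD''_k)\big)$ into a sum of per-round conditional losses. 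The key structural observation is that, conditioned on the past transcript, the server-held iterate ${\bm \theta}^{[s-1]}$ is a deterministic function of $\{\by^{[i]}\}_{i<s}$ and of the server's own noise $\bq^{[i]}$ (which is data-independent and, as noted after \eqref{eq: def dp}, does not contribute to privacy), hence identical under both hypotheses. Therefore the conditional law of $\by^{[s]}$ differs between the two hypotheses only through the mean shift $h_k^{[s]}\alpha_k^{[s]}[\nabla f_k'({\bm \theta}^{[s-1]})-\nabla f_k''({\bm \theta}^{[s-1]})]$, whose norm is at most $\chi_k^{[s]}$. Each round is thus a Gaussian mechanism of $\ell_2$-sensitivity $\chi_k^{[s]}$ and noise variance $N_0$, with conditional privacy loss $\cN(\mu^{[s]},2\mu^{[s]})$ where $\mu^{[s]}=(\chi_k^{[s]})^2/(2N_0)$; since the sensitivities depend only on the fixed channel realizations and the scheduling indicators (not on the outcomes), composition (following \cite[Lemma 1]{liu2020privacy}) yields a total loss $L\sim\cN(\mu_L,2\mu_L)$ with $\mu_L=\sum_{s=1}^{s'}(\chi_k^{[s]})^2/(2N_0)$.

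Next I would apply the standard sufficient condition that $(\epsilon,\delta)$-DP holds whenever $\Pr[L>\epsilon]\le\delta$ under the first hypothesis, and evaluate the Gaussian tail as $\Pr[L>\epsilon]=\tfrac12\,\mathrm{erfc}\big((\epsilon-\mu_L)/(2\sqrt{\mu_L})\big)$. Using the sensitivity bound \eqref{eq: bound def sensitivity}, $\mu_L$ is dominated by the left-hand side of \eqref{eq: privacy constraint}, so the hypothesis of the proposition forces $\mu_L\le\mathcal{R}_{\sf dp}(\epsilon,\delta)$. The main algebraic step is then to verify the equivalence, with $c=\mathcal{C}^{-1}(1/\delta)$, that $\mu_L\le\big(\sqrt{\epsilon+c^2}-c\big)^2$ is the same as $c\le(\epsilon-\mu_L)/(2\sqrt{\mu_L})$ (one squares $\sqrt{\mu_L}+c\le\sqrt{\epsilon+c^2}$ and cancels $c^2$); in particular $\mathcal{R}_{\sf dp}<\epsilon$, so the tail argument is positive. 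Consequently the $\mathrm{erfc}$ argument is at least $c$, and monotonicity of $\mathrm{erfc}$ together with the elementary bound $\mathrm{erfc}(c)\le e^{-c^2}/(\sqrt{\pi}\,c)=1/\mathcal{C}(c)=\delta$ gives $\Pr[L>\epsilon]\le\delta/2\le\delta$, establishing the guarantee. Monotonicity of $\mathcal{C}(x)=\sqrt{\pi}\,x e^{x^2}$ on $x>0$ ensures $\mathcal{C}^{-1}$ is well defined, and the spare factor $\tfrac12$ confirms the bound sits on the conservative side.

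I expect the main obstacle to be the composition step rather than the scalar algebra. The delicate point is justifying that conditioning on the past transcript freezes the query point ${\bm \theta}^{[s-1]}$ identically across the two neighboring hypotheses, so that each round reduces exactly to a simple Gaussian mechanism of sensitivity $\chi_k^{[s]}$, and that the resulting conditionally Gaussian per-round losses add in both mean and variance despite the queries being selected adaptively from past outputs. This is where the data-independence of $\chi_k^{[s]}$ (a consequence of the truncated channel inversion \eqref{def: pc} and the bounded-gradient Assumption~\ref{assumption: BLL}) is essential, and it is the ingredient imported from \cite[Lemma 1]{liu2020privacy}. Once this reduction is in place, the tail bound and the inversion of $\mathcal{C}$ are routine.
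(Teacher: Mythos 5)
Your proposal is correct and takes essentially the same route as the paper: the paper's proof is a one-line appeal to \cite[Lemma 1]{liu2020privacy} (with scheduling handled by folding the indicator ${\bf 1}[|h_k^{[s]}|\geq g^{[s]}]$ into the effective gain), and your argument is precisely a reconstruction of that lemma --- per-round Gaussian-mechanism analysis of the channel-noise-perturbed transmissions with data-independent sensitivity bound \eqref{eq: bound def sensitivity}, composition of the conditional losses, the tail bound $\mathrm{erfc}(x)\le e^{-x^2}/(\sqrt{\pi}x)$ that generates $\mathcal{C}$, and the squaring algebra that yields $\mathcal{R}_{\sf dp}(\epsilon,\delta)$. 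The one step you import, namely that the adaptively composed loss is dominated by $\cN(\mu_L,2\mu_L)$ (exact Gaussianity fails since the realized per-round shifts, unlike their sensitivity bounds, depend on past outcomes), is exactly the ingredient the paper imports wholesale from the cited lemma, so nothing is missing relative to the paper's own proof.
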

\begin{proof}The result follows from \cite[Lemma 1]{liu2020privacy}, although reference \cite{liu2020privacy} did not account for the threshold-based scheduling in \eqref{def: pc}. The extension is direct by redefining the effective channel gain as $\alpha^{[s]} \cdot {\bf 1}\big[|h_k^{[s]}|\geq g^{[s]}\big]$. 
\end{proof}

In accordance to the discussion above, the left-hand side of \eqref{eq: privacy constraint} quantifies the overall privacy loss across $s'$ rounds. Importantly, as anticipated, by \eqref{eq: privacy constraint}  the channel noise power $N_0$ contributes in full to the DP performance. In contrast, by \eqref{eq: LMC wireless gd noise}, only a portion of the channel noise contributes, in general, to the LMC update.   



\section{Optimal Power Allocation and Scheduling}\label{sec: optmization}
 
 In this section, we leverage Proposition \ref{prop: convergence} and Proposition \ref{lemma: privacy constraint}  to address the problem of minimizing the convergence error under the $(\epsilon, \delta)$-DP constraint \eqref{eq: privacy constraint} and the power  constraints \eqref{eq: power constraint} over power gain parameters and thresholds $\{\alpha^{[s]},g^{[s]}\}_{s=1}^{S}$ in \eqref{def: pc}. We recall that WFLMC carries out $S_b$ communication rounds for the burn-in period, which are followed by $S_u$ additional rounds to obtain the samples for use in downstream applications. The learning objective is to maximize the quality of  the last $S_u$ samples under the mentioned privacy and power constraints, which apply for the total of $S=S_b+S_u$ communication rounds.

Throughout this section, we assume that the sequence of channels $\{\{h_k^{[s]}\}_{k=1}^K\}_{s=1}^S$ is known in advance in order to enable optimization. {\color{black}This assumption can be relaxed at the cost of additional communication overhead.} Extensions to an online approach  can be directly obtained by iterative one-step-ahead optimization based on predicted values for the future parameters $\{\{h_k^{[s']}\}_{k=1}^K\}_{s'=s+1}^S$ as detailed in \cite{liu2020privacy}. The optimization is conducted at the edge server. To control AirComp transmission, the edge server broadcasts the optimized $\{\alpha^{[s]}, g^{[s]}\}$ at each communication round, which consumes a negligible amount of communication resources as compared with model broadcasting.

 The  joint power allocation and scheduling problem of interest is  formulated as  the min-max optimization   
 \begin{subequations}\label{generic OPT} 
 \begin{align} 
   \min_{{\{\alpha^{[s]}, g^{[s]}\}}_{s=1}^{S}}\max_{s'\in[S_b+1,  S_b+S_u]}&\quad \widetilde{W}_2\big(p^{[s']}(\bm \theta),p({\bm\theta}|\cD)\big)^2  \label{opt obj} \\
{\rm s.t.} \quad &\sum_{s=1}^{S} \frac{{\bf 1}[|h_k^{[s]}|\geq g^{[s]}] \cdot 2 (\alpha^{[s]}\ell)^2}{N_0} \leq \mathcal{R}_{\sf dp}(\epsilon, \delta)  \label{opt dp}\\
& {\bf 1}\big[|h_k^{[s]}|\geq g^{[s]}\big]\cdot \frac{(\alpha^{[s]}\ell)^2}{|h_k^{[s]}|^2} \leq P , \quad \forall k,s=1,\cdots, S .  \label{opt power constraint} 
 \end{align}
\end{subequations}
 The maximization in \eqref{opt obj} aims at ensuring that the  worst-case 2-Wasserstein distance is minimized across all $S_u$ samples after the burn-in period, under the DP constraint \eqref{opt dp} and the power constraint \eqref{opt power constraint}.  With its focus on the distribution of model parameters, problem \eqref{generic OPT} is notably distinct from the optimization problem for frequentist learning in \cite{liu2020privacy}, which only considers the quality of a single vector of model parameters in terms of training loss. 
 

The problem \eqref{generic OPT} is non-convex since the objective function is non-differentiable in the thresholds $\{g^{[s]}\}_{s=1}^S$.  In fact, by \eqref{def: pc}, the threshold $g^{[s]}$ affects the objective function through the number $K_a^{[s]}$ of active users. To make progress, we fix the scalar thresholds $g^{[s]}$ and optimize over the power gain parameter $\alpha^{[s]}$. 



 \subsection{Zero Additive Noise is Optimal}
 We start by simplifying problem \eqref{generic OPT} through the following observation. 
   \begin{lemma}[Zero Additive Noise]\label{lemma: zero add nosie}
\emph{Without compromising optimality, the variance of additive noise $\sqrt{\beta^{[s]}} {\bq}^{[s]}$ in \eqref{eq: LMC wireless} can be set as $\beta^{[s]}=0$, which is equivalent to imposing the following constraint on the power gain parameters 
\begin{align}\label{eq: LMC noise req}
\text{(LMC noise requirement)}\quad \alpha^{[s]}\leq\frac{K}{K_a^{[s]}}\sqrt{\frac{\eta N_0}{2}}, \quad \text{for } s=1,\cdots,S.
\end{align}
}
\end{lemma}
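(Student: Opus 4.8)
The plan is to prove the lemma in two parts: first establish the stated equivalence between the condition $\beta^{[s]}=0$ and the constraint \eqref{eq: LMC noise req} by direct algebra, and then show that restricting to this regime incurs no loss of optimality in problem \eqref{generic OPT}. For the equivalence I would start from the definition \eqref{eq: noise control}: by inspection $\beta^{[s]}=0$ holds exactly when $2\eta \leq \eta^2 N_0 K^2/(\alpha^{[s]}K_a^{[s]})^2$, and since all quantities are positive, rearranging and taking square roots yields $\alpha^{[s]} K_a^{[s]} \leq K\sqrt{\eta N_0/2}$, which is precisely \eqref{eq: LMC noise req}. This step is routine.

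The substantive part is the optimality claim. The key structural observation is that, once the thresholds $\{g^{[s]}\}$ are fixed as assumed in this section, the number of active devices $K_a^{[s]}$ and the indicators ${\bf 1}[|h_k^{[s]}|\geq g^{[s]}]$ are determined by the channels alone and do not depend on $\alpha^{[s]}$. By the decomposition \eqref{eq: LMC wireless rw}--\eqref{eq: LMC wireless gd noise}, $\alpha^{[s]}$ therefore enters the objective \eqref{eq: result convergence} only through the gradient-noise variance $\widetilde{\beta}^{[s]}$ in \eqref{eq: rem noise} (the LMC noise $\sqrt{2\eta}\,\widehat{\bm \xi}^{[s]}$ and the deterministic gradient estimate $\tfrac{K}{K_a^{[s]}}\sum_{k}\nabla f_k$ being independent of $\alpha^{[s]}$), while it enters the DP constraint \eqref{opt dp} and the power constraint \eqref{opt power constraint} only through the increasing map $\alpha^{[s]}\mapsto (\alpha^{[s]})^2$. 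I would then record the complementarity implied by \eqref{eq: noise control} and \eqref{eq: rem noise}: $\beta^{[s]}>0$ exactly when $\widetilde{\beta}^{[s]}=0$, and conversely. In particular, for $\alpha^{[s]}$ strictly above the threshold of \eqref{eq: LMC noise req} one has $\widetilde{\beta}^{[s]}=0$, so the objective is flat in $\alpha^{[s]}$ over that range, whereas raising $\alpha^{[s]}$ only consumes more of the DP and power budgets.

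The optimality claim then follows by a truncation argument. Take any feasible $\{\alpha^{[s]}, g^{[s]}\}$ and let $s$ be any round with $\alpha^{[s]}$ exceeding the threshold in \eqref{eq: LMC noise req}; replace $\alpha^{[s]}$ by the boundary value $\tilde{\alpha}^{[s]}=(K/K_a^{[s]})\sqrt{\eta N_0/2}$, leaving $g^{[s]}$ and all other rounds untouched. Since $\widetilde{\beta}^{[s]}=0$ both before (the original $\alpha^{[s]}$ is above threshold) and after (the value $\tilde{\alpha}^{[s]}$ sits exactly at the boundary) this replacement, the term $\widetilde{\beta}^{[s]}$ appearing in $\widetilde{W}_2\big(p^{[s']}(\bm \theta),p({\bm\theta}|\cD)\big)^2$ for every $s'\geq s$ is unchanged, so the inner maximum over $s'$ and hence the objective value are preserved. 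At the same time, because $\tilde{\alpha}^{[s]}<\alpha^{[s]}$ and $K_a^{[s]}$ is unchanged, the left-hand sides of both \eqref{opt dp} and \eqref{opt power constraint} can only decrease, so feasibility is maintained. Iterating this replacement over all offending rounds yields a feasible point with identical objective value for which $\alpha^{[s]}\leq (K/K_a^{[s]})\sqrt{\eta N_0/2}$, i.e.\ $\beta^{[s]}=0$, for every $s$, which is the assertion.

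The main obstacle is really a point requiring care rather than a genuine difficulty: one must verify that the rescaling $K/(\alpha^{[s]}K_a^{[s]})$ applied to $\by^{[s]}$ in \eqref{eq: LMC wireless} leaves both the effective gradient estimate and the LMC noise term independent of $\alpha^{[s]}$, so that $\widetilde{\beta}^{[s]}$ is the \emph{only} channel through which $\alpha^{[s]}$ affects the objective. This is exactly the content of \eqref{eq: LMC wireless rw}--\eqref{eq: LMC wireless gd noise} together with the variance accounting of \eqref{eq: noise control}. Once that dependence is isolated, the monotonicity of the two constraints in $\alpha^{[s]}$ and the flatness of the objective above the threshold make the truncation argument immediate.
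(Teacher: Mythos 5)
Your proof is correct and follows essentially the same route as the paper's: the paper likewise replaces any over-threshold $\alpha^{[s]}$ by the boundary value $(K/K_a^{[s]})\sqrt{\eta N_0/2}$, observing that this leaves the objective unchanged (since $\widetilde{\beta}^{[s]}=0$ on both sides of the replacement) while reducing the left-hand sides of the DP and power constraints \eqref{opt dp}--\eqref{opt power constraint}. If anything, your write-up is the more careful one — the paper casts the replacement as a contradiction and leaves implicit the key point you verify explicitly, namely that $\alpha^{[s]}$ affects the objective only through $\widetilde{\beta}^{[s]}$.
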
 
\begin{proof} Assume by contradiction that we had $\beta^{[s]}>0$ at an optimal solution. By \eqref{eq: noise control}, this would imply that the optimal $\alpha^{[s]}$ satisfies the inequality ${\eta^2N_0K^2}/{(\alpha^{[s]} K_a^{[s]})^2} -{2}{\eta}<0$. But one can always choose the smaller value $\alpha^{[s]}=\sqrt{\eta N_0/2} K/K_a^{[s]} $, which achieves the same value of the objective function, while reducing the left-hand sides of the privacy and power constraints \eqref{opt dp}-\eqref{opt power constraint}. This yields a contradiction, completing the proof. 
\end{proof} 

The inequality \eqref{eq: LMC noise req} distinguishes two distinct regimes of operation of the system. If the equality \eqref{eq: LMC noise req}  is active, the gradient estimation noise power $\widetilde{\beta}^{[s]}$ in \eqref{eq: rem noise}  is zero, and hence the channel noise contributes in full to the LMC updates. In contrast, when the inequality is strict, only a fraction of the channel noise is useful for LMC, and the rest contributes to the gradient estimation noise power $\widetilde{\beta}^{[s]}$. 

\subsection{Optimization of Power Gain Parameters: Single-Sample and Constant Channels}\label{sec: single sample case}
We now tackle the optimization \eqref{generic OPT} over power gain parameters $\{\alpha^{[s]}\}_{s=1}^S$ for the special case $S_u=1$, and with constant channels and scheduling thresholds
\begin{align}
h_k^{[s]}&=h_k, \  \text{for} \ s=1,\dots, S, \label{eq: static channel}\\  
g^{[s]}&=g,  \ \text{for} \ s=1,\dots, S. \label{eq: static threshold}
 \end{align}
 The solution in this special case will turn out to be especially insightful, and the more general problem will be studied in the next subsection. Under theses simplifying assumptions, and using Lemma \ref{lemma: zero add nosie}, the min-max optimization in \eqref{generic OPT} reduces to the problem 
\begin{subequations}\label{os wt OPT1}
 \begin{align}
    \min_{{\{\alpha^{[s]}\}}_{s=1}^{S_b+1}} &\quad  \sum_{s=1}^{S_b+1} \Big(\frac{1+\gamma}{2}\Big)^{-2s}  \max \bigg\{0, \frac{\eta^2N_0K^2}{(\alpha^{[s] } K_a)^2} -{2}{\eta} \bigg\} \label{os wt obj} \\
{\rm s.t.} \quad &\sum_{s=1}^{S_b+1} \frac{ 2 (\alpha^{[s]}\ell)^2}{N_0}\leq \mathcal{R}_{\sf dp}(\epsilon, \delta),   \label{os wt dp}\\
& \frac{({\alpha^{[s]}\ell})^2}{|h_k|^2} \leq P , \quad \forall k \in \cK_a, s=1,\dots, S   \label{os wt power constraint}  \\
& \alpha^{[s]} \leq \frac{K}{K_a} \sqrt{\frac{\eta N_0}{2}} , \quad \forall k \in \cK_a, s=1,\dots, S.   \label{os eta noise} 
 \end{align}
\end{subequations}

\begin{theorem}\label{thm: optimal solution single sample} 
\emph{Under Assumptions \ref{assumption: Lipschitz}-\ref{assumption: BLL},  and assuming static channels and thresholds as in \eqref{eq: static channel}--\eqref{eq: static threshold},  the optimal solutions of problem \eqref{os wt OPT1} depends on power $P$ and learning rate $\eta$ according to the three regimes  illustrated in Fig. \ref{Fig: partition}, which are detailed as follows. 
 \begin{enumerate}
\item \emph{LMC-limited Regime:} If the condition 
\begin{align}\label{eq: cond eta}
\eta\leq\frac{K_a^2}{\ell^2K^2}\min\Big\{ \frac{ \mathcal{R}_{\sf dp}(\epsilon, \delta)} {S}, \min_{k\in\cK_a}\frac{2P|h_k|^2 }{N_0}\Big\}
\end{align}
 holds, the optimal power gain parameter is given as 
\begin{align}\label{eq: alpha LMC}
\alpha^{[s]}_{\sf opt}= \frac{K}{K_a}\sqrt{\frac{\eta N_0}{2}}, \ s=1,\dots, S.
\end{align}
\item \emph{Power-limited Regime:} If the condition 
\begin{align}\label{eq: cond P} 
{P}\leq \frac{N_0}{2\min_{k\in \cK_a}   |h_k|^2}\min\Big\{ \frac{ \mathcal{R}_{\sf dp}(\epsilon, \delta)} {S}, \frac{\ell^2 K^2\eta}{K_a^2 }\Big\}
\end{align} 
holds, the optimal power gain parameter is given as 
\begin{align}\label{eq: alpha power} 
\alpha^{[s]}_{\sf opt}= \min_{k\in \cK_a} \frac{\sqrt{P }|h_k|}{\ell} , \ s=1,\dots, S.
\end{align}
\item \emph{DP-limited Regime:} Otherwise, we have  optimal solution  
\begin{align}\label{eq: alpha DP}  
\alpha^{[s]}_{\sf opt}= \min\bigg\{ \Big(\frac{1+\gamma}{2}\Big)^{-s/2}\sqrt{\frac{\eta K N_0^{1/2}}{K_a \lambda^{1/2}}}, \min_{k\in \cK_a} \frac{\sqrt{P }|h_k|}{\ell} ,\frac{K}{K_a}\sqrt{\frac{\eta N_0}{2}} \bigg\},  
 \end{align}
 where the value of $\lambda$ can be obtained by bisection to satisfy the condition $\sum_{s=1}^{S_b+1}(\alpha_{\sf opt}^{[s]})^2=\frac{ N_0\mathcal{R}_{\sf dp}(\epsilon, \delta)}{2 \ell^2}$ for the active devices $k\in \cK_a $.   
  \end{enumerate}}
\end{theorem}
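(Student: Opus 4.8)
The plan is to recast problem \eqref{os wt OPT1} as a convex program in the squared gains and then solve it by combining a monotonicity argument with the KKT conditions. First I would substitute $x_s=(\alpha^{[s]})^2$, so that the DP constraint \eqref{os wt dp} becomes the single linear budget $\sum_{s=1}^{S_b+1}x_s\leq \mathcal{R}_{\sf dp}(\epsilon,\delta)N_0/(2\ell^2)\overset{\Delta}{=}B$, the power constraint \eqref{os wt power constraint} the per-round box $x_s\leq P\min_{k\in\cK_a}|h_k|^2/\ell^2\overset{\Delta}{=}x^{\sf pow}$, and the LMC requirement \eqref{os eta noise} the per-round box $x_s\leq \eta N_0 K^2/(2K_a^2)\overset{\Delta}{=}x^{\sf LMC}$. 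The crucial observation is that, since $x^{\sf LMC}=c/(2\eta)$ with $c=\eta^2N_0K^2/K_a^2$, the LMC box caps each $x_s$ exactly at the point where the argument of the $\max\{0,\cdot\}$ in \eqref{os wt obj} vanishes; hence on the feasible set we always have $c/x_s-2\eta\geq0$, and the objective collapses to the smooth convex function $\sum_s w_s(c/x_s-2\eta)$ with weights $w_s=((1+\gamma)/2)^{-2s}>0$. Dropping the additive constant, the program reduces to minimizing $\sum_s w_sc/x_s$ over the convex box-plus-budget feasible set, for which the KKT conditions are necessary and sufficient.

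Next I would exploit that each summand $w_sc/x_s$ is strictly decreasing in $x_s$, so that every $x_s$ is driven to the largest value the constraints permit. The analysis then splits according to whether the budget $\sum_s x_s\leq B$ is active. When it is slack, every $x_s$ can be raised to the common cap $u=\min\{x^{\sf LMC},x^{\sf pow}\}$, and slackness is equivalent to $Su\leq B$ with $S=S_b+1$. This yields the first two regimes: if $u=x^{\sf LMC}$ (i.e.\ $x^{\sf LMC}\leq x^{\sf pow}$) then the optimum is $x_s=x^{\sf LMC}$, the objective attains its global minimum $0$, and translating $x^{\sf LMC}\leq x^{\sf pow}$ together with $Sx^{\sf LMC}\leq B$ back into $\eta$ reproduces exactly condition \eqref{eq: cond eta} and the solution \eqref{eq: alpha LMC}; symmetrically, $u=x^{\sf pow}$ with $Sx^{\sf pow}\leq B$ yields the power-limited condition \eqref{eq: cond P} and the solution \eqref{eq: alpha power}. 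Both reductions are routine algebraic rearrangements of the two inequalities defining each regime.

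Finally, when $Su>B$ the budget is active, giving the DP-limited regime. Here I would write the Lagrangian with multiplier $\lambda$ for the budget and form the stationarity condition $-w_sc/x_s^2+\lambda=0$, whose unconstrained solution is $x_s=\sqrt{w_sc/\lambda}=((1+\gamma)/2)^{-s}\sqrt{c/\lambda}$; reinstating the upper box constraints clips this at $\min\{x^{\sf LMC},x^{\sf pow}\}$, and taking square roots recovers precisely \eqref{eq: alpha DP}. The water-filling level $\lambda$ is then pinned down by forcing the budget to hold with equality, $\sum_{s}(\alpha^{[s]}_{\sf opt})^2=N_0\mathcal{R}_{\sf dp}(\epsilon,\delta)/(2\ell^2)$, which is the stated bisection condition; uniqueness of $\lambda$ follows because the clipped sum is continuous and monotonically decreasing in $\lambda$.

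The main obstacle I expect is not the convexity or the monotonicity, both of which become transparent once the $\max\{0,\cdot\}$ is eliminated, but rather the careful bookkeeping in the DP-limited KKT analysis: one must verify that the box-clipped water-filling solution genuinely satisfies complementary slackness for all three families of constraints simultaneously, and that the three regimes partition the $(P,\eta)$-plane exhaustively with matching solutions on the shared boundaries, for instance when $x^{\sf LMC}=x^{\sf pow}$. Establishing that the LMC cap coincides with the kink of the objective is the single algebraic identity that makes the whole argument clean, and I would foreground it at the outset.
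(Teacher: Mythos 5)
Your proposal is correct and follows essentially the same route as the paper's proof: the substitution $(\alpha^{[s]})^2=a^{[s]}$, the elimination of the $\max\{0,\cdot\}$ via the LMC cap (which is exactly how the paper passes to its convex reformulation), and a KKT/water-filling analysis split on whether the DP budget is active, with $\lambda$ fixed by the budget equality via bisection. The only difference is presentational — you replace the paper's explicit multipliers for the box constraints with a direct monotonicity argument in the budget-slack regimes, which is a valid shortcut.
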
 
\proof The problem is seen to be convex by changing variables $(\alpha^{[s]})^2=a^{[s]} \geq 0$ and the solution approach involves applying Lagrange multiplier method and  Karush-Kuhn-Tucker (KKT) conditions in a manner similar to \cite[Theorem 1]{liu2020privacy}. Details can be found in Appendix \ref{proof: optimal solution single sample}. 
 \begin{figure}[t]
\centering
\includegraphics[width=8cm]{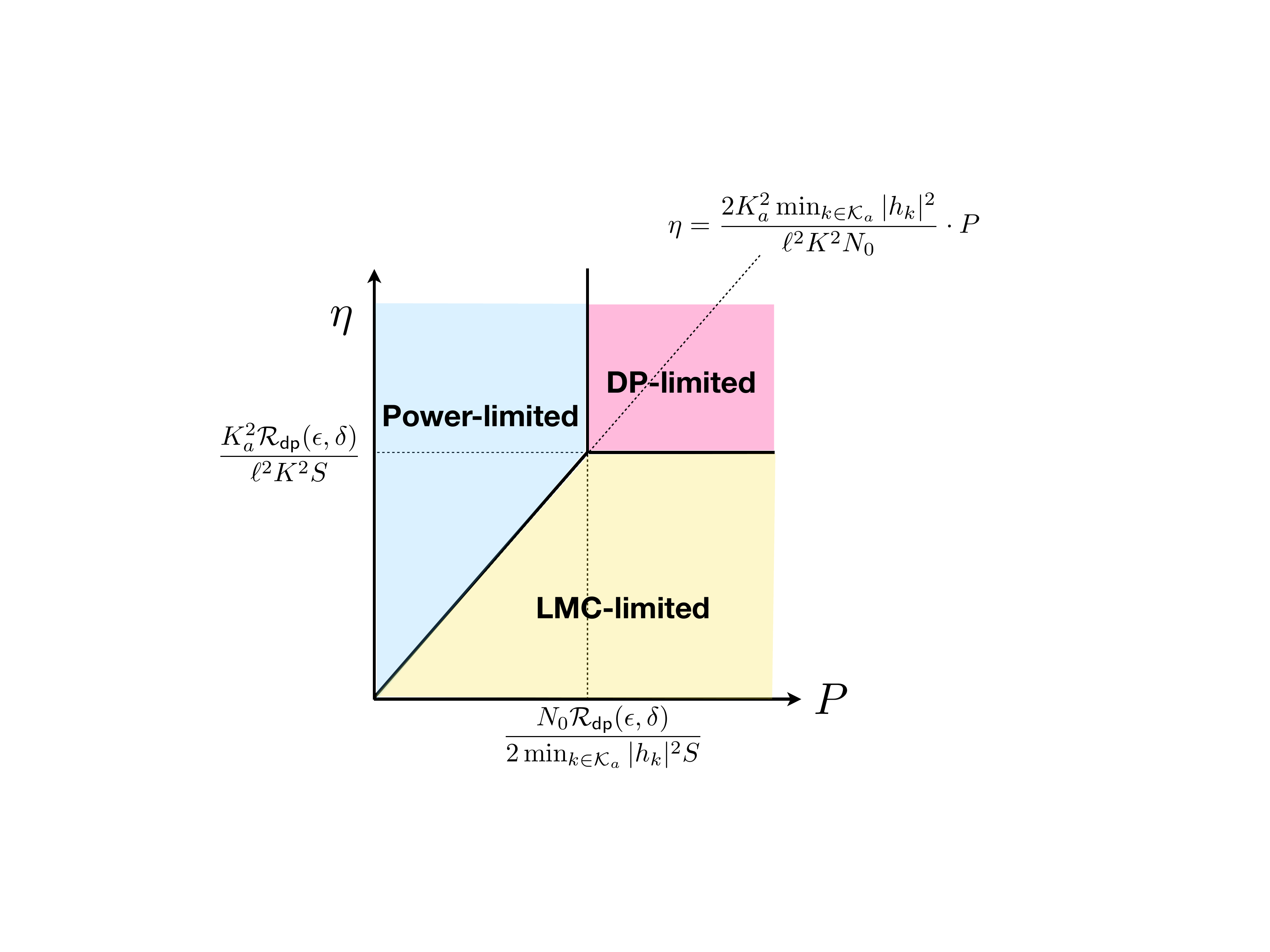}
\vspace{-6mm}
\caption{Illustration of the different regimes that describe the optimal power control strategy in Theorem \ref{thm: optimal solution single sample}.}
\label{Fig: partition}
\vspace{-6mm}
\end{figure}

The three regimes highlighted in Fig. \ref{Fig: partition} correspond to settings where each of the corresponding constraints, i.e.,  \eqref{eq: alpha LMC}, \eqref{eq: alpha power} or \eqref{eq: alpha DP}, are active. Accordingly, we have the following~observations: 

\noindent $\bullet$ In the LMC-limited regime, the channel noise contributes in full to LMC update, and hence the presence of channel noise and DP does not affect the performance of LMC. In contrast, in the other two regimes, the additional gradient noise causes a loss as compared to a noiseless implementation of LMC. 

\noindent $\bullet$ In the DP-limited regime, the privacy constraint affects the performance of WFLMC, while in the other regimes, privacy is obtained ``for free", i.e., as a direct consequence of the presence of channel noise.  
To optimize the learning performance, the power control parameter should be adaptive across the iterations.

\subsection{Optimization of Power Gain Parameters: General Case}
In this subsection, we consider the general optimization problem \eqref{generic OPT} with any $S_u\geq 1 $ and under time-varying channels. 
 To this end, we include LMC noise requirement leveraging Lemma~\ref{lemma: zero add nosie}, and start by rewriting the min-max optimization \eqref{generic OPT} in the epigraph form 
\begin{subequations}\label{ms wt OPT1}
 \begin{align}
{(\text {\bf Multi-Sample Opt.})} \    \min_{{\{\alpha^{(s)}\}}_{s=1}^{S}} &\quad  \nu \label{ms wt obj} \\
{\rm s.t.} \quad &\sum_{s=1}^{S} \frac{{\bf 1}[|h_k^{[s]}|\geq g^{[s]}] \cdot 2 (\alpha^{[s]}\ell)^2}{N_0}\leq \mathcal{R}_{\sf dp}(\epsilon, \delta), \quad \forall k  \label{ms dp}\\
& \l(\frac{{\bf 1}\big[|h_k^{[s]}|\geq g^{[s]}\big] \alpha^{[s]}\ell}{h_k^{[s]}}\r)^2 \leq P , \quad \forall k,s   \label{ms power constraint} \\
&\alpha^{[s]} \leq \frac{K}{K_a^{[s]}} \sqrt{\frac{\eta N_0}{2}} , \quad \forall s   \label{ms LMC noise} \\
& \widetilde{W}_2\big(p^{[s']}(\bm \theta),p({\bm\theta}|\cD)\big)^2  \leq \nu,  \ \forall s'=S_b+1, \dots, S.   \label{ms ws}  
 \end{align}
\end{subequations}
Having made the change of variables $(\alpha^{[s]})^2=a^{[s]} \geq 0 $, problem \eqref{ms wt OPT1} can be easily seen to be convex. Therefore, it can be solved using standard numerical tools. 

\subsection{Optimization of Truncated Thresholds}
  In this subsection, we turn to the problem of optimizing  \eqref{opt obj} over the thresholds $\{g^{[s]}\}_{s=1}^S$. As mentioned, this problem is characterized by a non-differentiable objective function, and it should be addressed jointly with the optimization of the power gain parameters $\{\alpha^{[s]}\}_{s=1}^S$. To make progress, we propose a sub-optimal approach that decouples the two problems by setting  the value of $\alpha^{[s]}$ to ensure equality in the constraint \eqref{opt power constraint} as 
  \begin{align}\label{eq: fpt}
\text{(Full Power Transmission)} \quad \alpha^{[s]}=\min_{k\in \cK_a^{[s]}} \frac{\sqrt{P }h_k^{[s]}}{\ell}, \quad s=1,\dots,S. 
\end{align} 
Note that the choice in \eqref{eq: fpt} is only made for the purpose of optimizing the thresholds. After the thresholds $\{g^{[s]}\}_{s=1}^S$ are optimized as explained next, the power gain parameters $\{\alpha^{[s]}\}_{s=1}^S$ are selected by following the previous subsections. 

Using \eqref{eq: fpt}, the min-max problem \eqref{opt obj} over $\{g^{[s]}\}_{s=1}^S$ can be expressed as  the sum of  gradient error over $s'$ communication rounds,  for $s'=S_b+1,\dots, S$
\begin{align}
 \min_{\{g^{[s]}\}_{s=1}^S} \quad \sum_{s=1}^{s'}  \Big(\frac{1+\gamma}{2}\Big)^{2(s'-s)}  \frac{2(1+\gamma)}{1-\gamma} \bigg[ 4\eta^2\ell^2\big(K-K_a^{[s]}\big)^2 + \widetilde{\beta}^{[s]} \bigg], \  \text{for} \  s'\in[S_b+1,  S]  
  \label{eq: min grad error}
\end{align}
where we have dropped the constraints since they are assumed to be dealt with by the subsequent optimization over $\{\alpha^{[s]}\}_{s=1}^S$. The minimization \eqref{eq: min grad error} can be addressed as $s'$ parallel optimizations, and it is equivalent to focus on $s'=S$ as 
\begin{align}\label{eq: min error g}
 \min_{g^{[s]}}\quad 4\ell^2 \big(K-K_a^{[s]}\big)^2+\max \bigg\{ 0,   \frac{N_0K^2\ell^2}{P( K_a^{[s]})^2 \min_{k\in\cK_a^{[s]}}|h_k^{[s]}|^2} -\frac{2}{\eta} \bigg\}, \  \text{for} \  s=1,\dots, S, 
\end{align} 
where the second term is obtained by the definition of $\widetilde{\beta}^{[s]}$ in \eqref{eq: rem noise}. {\color{black} The first term is estimation error depends on the bound $\ell$  of the norm of the local gradient as per Assumption 3. We note that this result can be directly extended to device-dependent bounds $\ell_k$ for each device $k$. With this extension, the threshold ${g^{[s]}}$ in the scheduling rule \eqref{def: pc} would depend on the properties of the local data set via constants $\{\ell_k\}_{k=1}^K$.}

{\color{black}
The threshold $g^{[s]}$ affects the objective function \eqref{eq: min error g}  through the number $K_a^{[s]}$ of active users. Decreasing $K_a^{[s]}$  aggravates the estimation error (first term) due to partial scheduling, while providing a chance to alleviate the excess noise (second term) by silencing the devices with the worst channels. The objective function \eqref{eq: min error g} can take at most $K$ different values that are attained by setting $g^{[s]}$ as one of the channel gains $ |h_1^{[s]}|, \cdots,  |h_K^{[s]}|$. Therefore, we can limit the search to these values without loss of optimality. It follows that the optimal solution for each threshold $g^{[s]}$ can be obtained by exhaustive search over the $K$ values $\{|h_k^{[s]}|\}_{k=1}^K$ by minimizing~\eqref{eq: min error g}. }

 \section{Numerical Results}\label{sec:simulation}
 
{\color{black}In this section, we investigate the effectiveness of the proposed scheme, WFLMC, as a mechanism to implement LMC on wireless channels using numerical experiments. We emphasize that our goal is not that of comparing the performance of Bayesian and frequentist techniques. This is a subject that has been extensively explored in the literature (see Sec. I), and we consider it to be beyond the scope of this contribution. Rather, our focus is on evaluating the effectiveness of the specific proposed implementation mechanism of LMC as compared to more conventional solutions. To this end, we consider the following benchmark schemes.}
\begin{itemize}
 \item [1)] {\bf WFMLC with equal power allocation:} This reference scheme follows the approach in \cite{seif2020wireless} of dividing up the DP constraint equally across all communication rounds. By Lemma~\ref{lemma: privacy constraint}, this corresponds to imposing the constraint  
\begin{align}\label{eq: equal dp}
\frac{2 (\alpha^{[s]} \ell)^2}{N_0}<\frac{\mathcal{R}_{\sf dp}(\epsilon,\delta)}{ \sum_{s=1}^S {\bf 1}[|h_k^{[s]}|\geq g^{[s]}]}
\end{align}
for each communication round. Condition \eqref{eq: equal dp}, along with the power constraint \eqref{opt power constraint} and  LMC noise requirement \eqref{eq: LMC noise req} yield the power scaling gains 
\begin{align}
 \alpha^{[s]}=\min\bigg\{ \frac{1}{\ell}\sqrt{ \frac{{N_0\mathcal{R}_{\sf dp}(\epsilon,\delta)}}{  {2 \sum_{s=1}^S {\bf 1}[|h_k^{[s]}|\geq g^{[s]}]}}}, \min_{k\in \cK_a^{[s]}} \frac{\sqrt{P }|h_k^{[s]}|}{\ell},\frac{K}{K_a^{[s]}}\sqrt{\frac{\eta N_0}{2}}\bigg\}, \ \forall s . 
\end{align}
 \item [2)]  {\bf WFLMC without DP constraint:} In this scheme, the power gain parameter $\alpha^{[s]}$ is set~as 
 \begin{align}
 \alpha^{[s]}=\min\bigg\{ \min_{k\in \cK_a^{[s]}} \frac{\sqrt{P }|h_k^{[s]}|}{\ell},\frac{K}{K_a^{[s]}}\sqrt{\frac{\eta N_0}{2}}\bigg\}, \ \forall  s ,  
\end{align}
  which corresponds to the optimal solution without the DP constraint. 
\end{itemize}
We also consider for reference a scenario characterized by ideal communication without fading (i.e., $h_k=1$), and channel noise (i.e., $N_0=0$). To guarantee privacy and implement LMC update, noise is added at each devices before transmission as 
\begin{align}
 \bx_k^{[s]}=   \nabla f_k\big({\bm \theta}^{[s-1]}\big)+ \sqrt{\sigma^{[s]}}\bn_k^{[s]}, 
\end{align}
and $\bn_k^{[s]}\sim \cN(0, \bI_m)$. The variance $\sqrt{\sigma^{[s]}}$ depends on the DP constraint. Specifically, following the analysis in WFLMC, we can define the following two additional benchmark schemes. 
 \begin{itemize} 
 \item [3)]    {\bf Noiseless federated LMC with DP constraint:} By solving problem \eqref{generic OPT} without power constraint \eqref{opt power constraint}, the optimal value $(\alpha^{[s]})_{\sf opt'}$ is used to set the variance of additive noise as $\sigma^{[s]}=N_0/[K(\alpha^{[s]})_{\sf opt'}^2]$ in order to satisfy the DP constraint. 
 \item [4)]    {\bf Noiseless federated LMC without DP constraint:} Removing the DP constraints, the implementation of the LMC update \eqref{eq: LMC} requires the variance of update noise in \eqref{eq: LMC update term} to equal $2\eta$, and by \eqref{eq: LMC wireless rw} we have $\sigma^{[s]}=2/(K\eta)$.
\end{itemize} 


As for the learning model, we consider a Gaussian linear regression with likelihood
 \begin{align}\label{eq: gl model}
p(v_n|{\bm \theta}, \bu_n) =  \frac{1}{\sqrt{2\pi}}e^{-\frac{1}{2} (v_n-{\bm \theta}^{\sf T}\bu_n)^2}, 
\end{align}
and the prior $p({\bm \theta})$ is assumed to follow Gaussian distribution $\cN(0, \bI_m)$. Therefore, the posterior $p({\bm \theta}|\cD)$ is the Gaussian $\cN\big((\bU\bU^{\sf T}+\bI)^{-1}\bU\bv,(\bU\bU^{\sf T}+\bI)^{-1}\big)$,  where $\bU=[\bu_1,\cdots,\bu_N]$ is the data matrix and $\bv\!\!=\!\![v_1, \cdots, v_N]^{\sf T}$ is the label vector. The strong convexity parameter $\mu$ and smoothness parameter $L$ are computed as the smallest and largest eigenvalues of the data Gramian matrix $\bU\bU^{\sf T}+\bI$. We will also consider experiments with the MNIST data set at the end of this section. 

We consider a synthetic data set $\{\bd_n \!  = \!   (\bu_n,v_n)\}_{n=1}^N$ with $N \!=\!  1200$ following the model \eqref{eq: gl model} with covariates $\bu_n   \in   \mathbb{R}^{m}$ drawn i.i.d. from Gaussian distribution $\mathcal{N}(0, \bI_m) $ where $m \! =\! 5$, and ground-truth model parameter ${\bm \theta}^*   \!\!= \!\! [0.071,
-0.518, 0.9342, 0.7198, 0.4676]^{\sf T} $.  
 The Wasserstein distance between two Gaussian distribution $p_1 \!\! =\!   \cN({\bf m}_1,   \bC_1\!)$ and $p_2 \! = \!   \cN({\bf m}_2,   \bC_2)$ is computed~as~\cite{sun2018gaussian} 
 \begin{align}\label{eq: gauss ws eval} 
 W_2(p_1,p_2)^2= \|{\bf m}_1-{\bf m}_2\|^2+\trace\Big(\bC_1+\bC_2-2(\bC_2^{1/2}\bC_1\bC_2^{1/2})^{1/2}\Big).
 \end{align}
 
 {\color{black}Unless stated otherwise, we consider homogeneous local data distributions by dividing the data set equally among the $K=30$ devices.  In this regard, we observe that Bayesian learning is generally not impacted by ``non-i.i.d." data distributions in the sense that global posterior is well defined irrespective of data homogeneity or heterogeneity at the devices. That said, in the presence of privacy constraint, the need  for clipping causes centralized and federated implementations of LMC to be different, and the performance loss of federated against centralized LMC becomes more pronounced for a smaller clipping threshold in the presence of heterogeneous distribution. Here we set the clipping threshold to $\ell=30$, and leave an investigation of the interplay between data heterogeneity and clipping to future work.} Furthermore, the initial sample ${\bm \theta}^{[0]}$ is drawn from prior with discrepancy $W_2\big(p^{[0]}(\bm \theta),p({\bm\theta}|\cD)\big)^2=6.64 $.

\subsection{Single-Sample Case}
We now focus on  single-sample case ($S_u=1$) by applying  the optimal power allocation detailed in Theorem~\ref{thm: optimal solution single sample} while scheduling all devices ($K_a^{[s]}=K$).  As in Theorem~\ref{thm: optimal solution single sample}, the channels $h_k^{[s]}$ are constant and set  to $0.01$ for all devices $k$; the number of communication burn-in periods is set to $S_b=50$; DP parameters are given as $\delta=0.01$ and $\epsilon=8$. 

 \begin{figure}
\centering    
\subfigure[$\eta=0.4/(\mu+L)$.]
{\includegraphics[width=7cm]{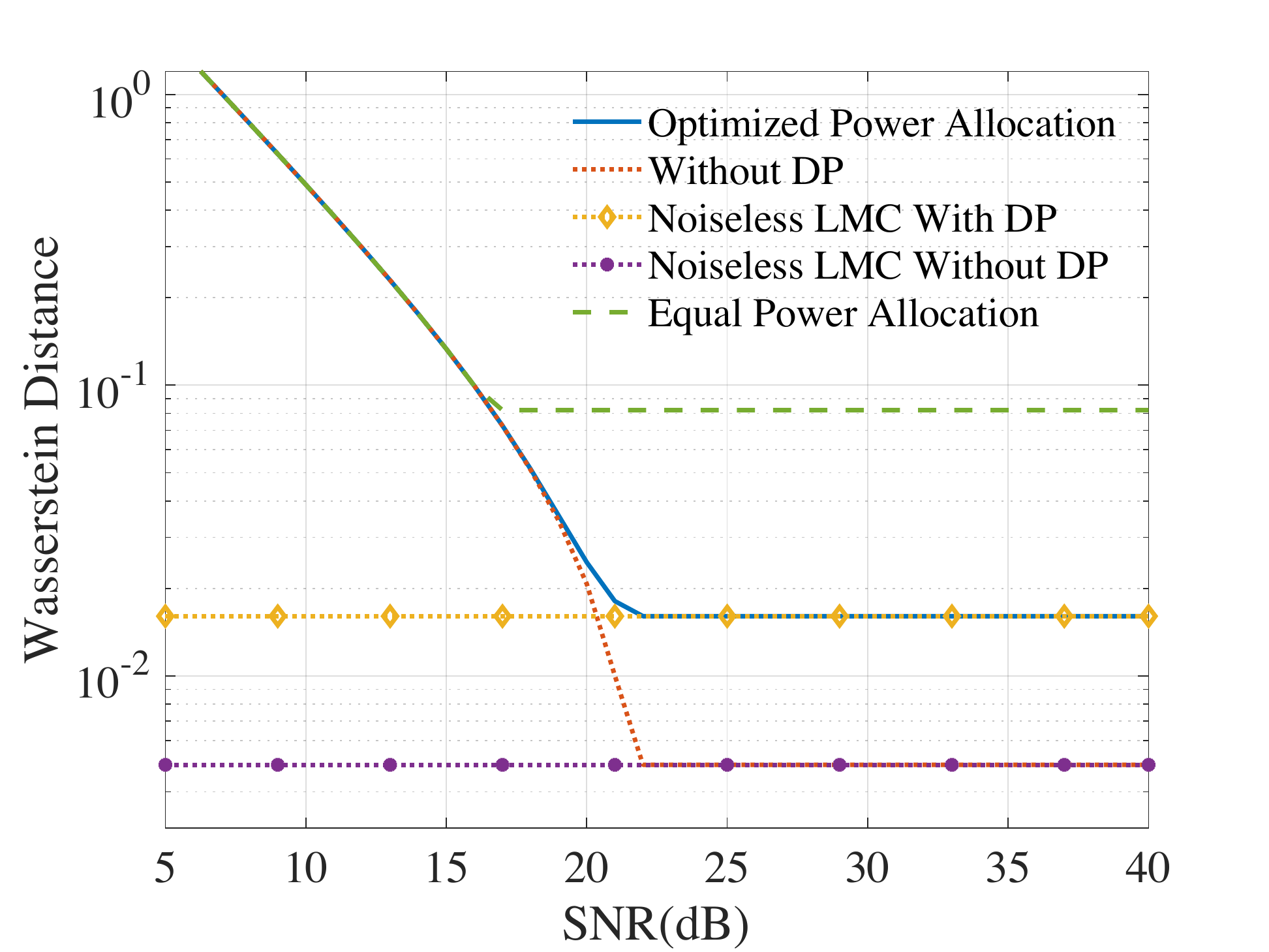}\label{Fig: SNR1}}
\subfigure[$\eta=0.13/(\mu+L)$.]
{\includegraphics[width=7cm]{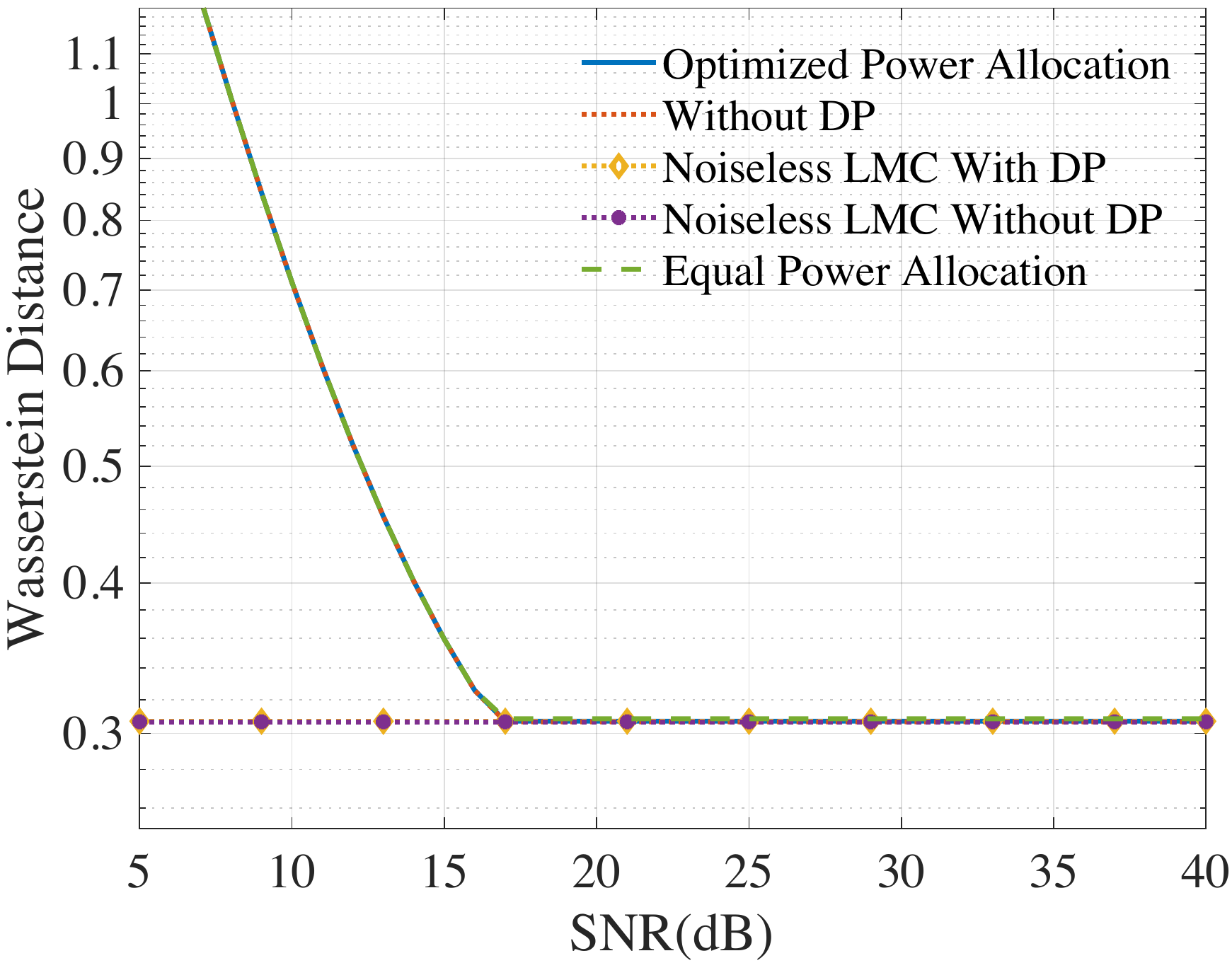}\label{Fig: SNR2}}
\caption{Bound \eqref{eq: result convergence} on the 2-Wasserstein distance  versus learning rate $\eta$ for WFLMC and noiseless LMC schemes ($\epsilon=8$, $\delta=0.01$,  $S_b=50$, $S_u=1$).}
\label{Fig: SNR}
\vspace{-6mm}
\end{figure}

\noindent $\bullet$ {\bf Impact of the SNR:} In Fig. \ref{Fig: SNR}, we plot the bound \eqref{eq: result convergence} on the 2-Wasserstein distance  versus SNR, defined as $P/(m N_0)$, by varying transmitted power $P$.  The learning rate is set as $\eta=0.4/(\mu+L)$ and $0.13/(\mu+L)$ in order to illustrate the different regimes defined in Fig. \ref{Fig: partition}.  In Fig. \ref{Fig: SNR1}, under the larger learning rate,  the performance is limited by the power constraint until $\SNR=16.6$ dB, after which it becomes limited by DP.  In the DP-limited regime, optimized power allocation attains better performance than static power allocation.  For $\SNR>22$ dB, the performance under optimized power allocation becomes equivalent to that of the system with ideal noiseless communication with the DP, showing that all channel noise is repurposed for DP mechanism while only partial is for MC sampling.  That is, imposing the DP constraint causes some performance loss due to the need of scaling down the transmission power to  decrease the effective SNR. In line with the results in Fig. \ref{Fig: partition}, Fig. \ref{Fig: SNR2} shows the transition from power-limited regime to LMC-limited regime for the smaller learning rate. In the LMC-limited regime, all the schemes have same performance since the channel noise variance is only determined by the LMC constraint.

\begin{figure}
\centering    
\subfigure[$\SNR=18$ dB.]
{\includegraphics[width=7cm]{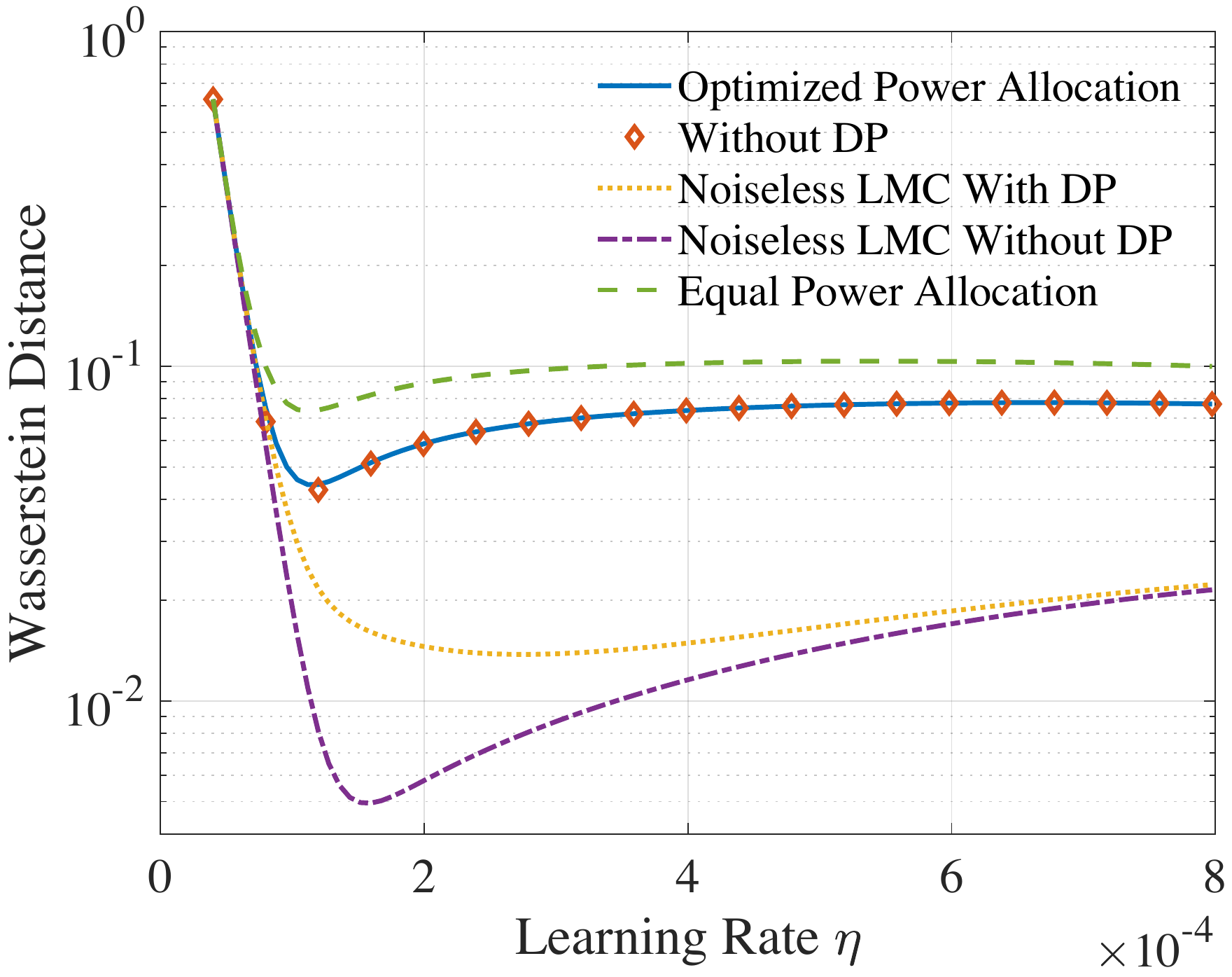}\label{Fig: etaSNR18}}
\subfigure[$\SNR=30$ dB.]
{\includegraphics[width=7cm]{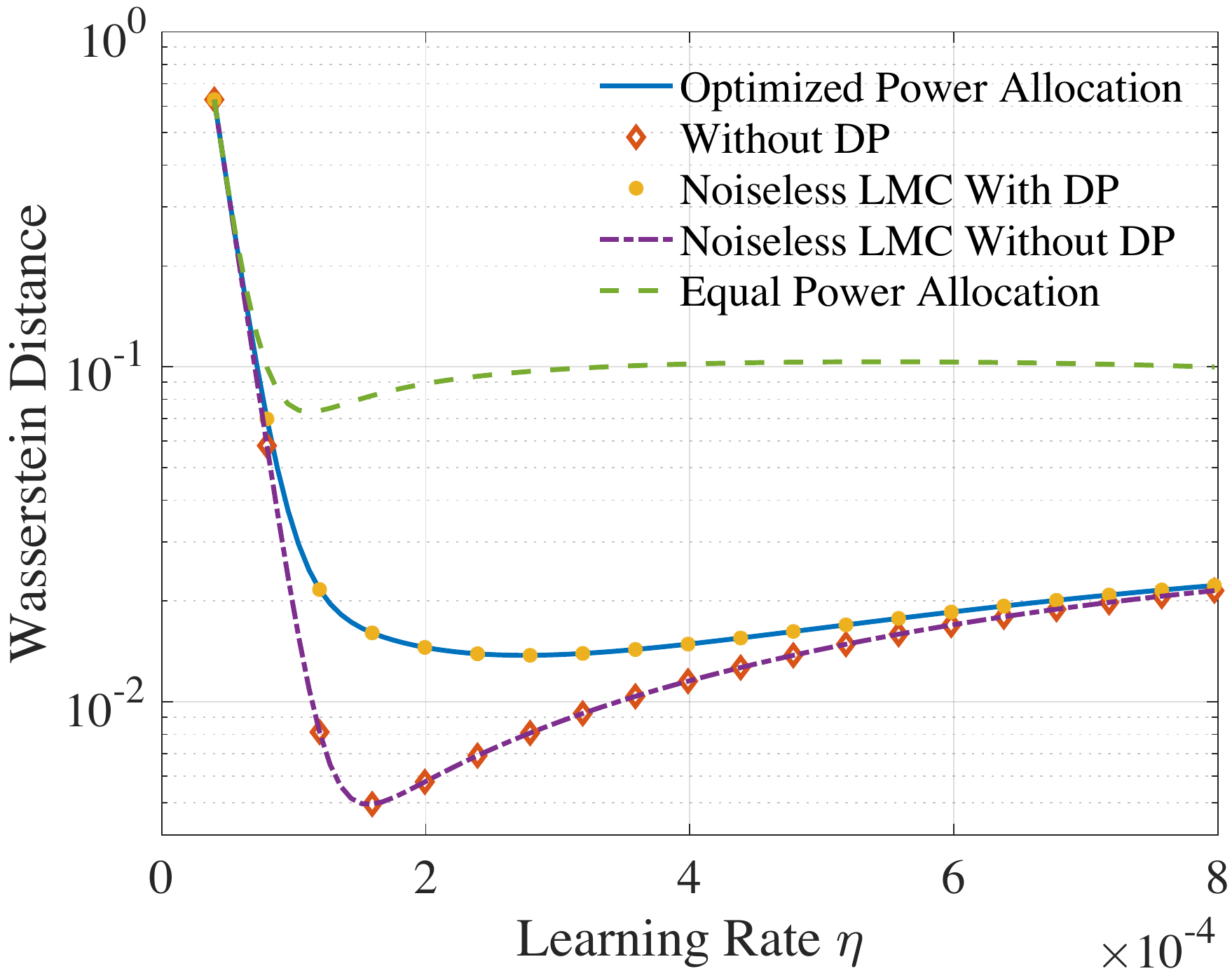}\label{Fig: etaSNR30}}
\vspace{-3mm} \caption{Bound \eqref{eq: result convergence} on the 2-Wasserstein distance  versus learning rate $\eta$ for WFLMC and noiseless LMC schemes ($\epsilon=8$, $\delta=0.01$,  $S_b=50$, $S_u=1$).}
\label{Fig: eta}
\vspace{-6mm}
\end{figure}

\noindent $\bullet$ {\bf Impact of the learning rate:}  We now further elaborate on the impact of the learning rate $\eta$ in Fig.~\ref{Fig: eta} by setting $\SNR=18$ dB and $\SNR=30$ dB. First, with $\eta\leq 0.5\times 10 ^{-4}$, by Fig.~\ref{Fig: partition}, we are in the LMC-limited regime where all the schemes have same performance. When $\eta> 0.5\times 10 ^{-4}$, the performance of optimized power allocation is shown to be power-limited for $\SNR=18$ dB in Fig. \ref{Fig: etaSNR18}, as it is identical to that without the DP constraint. In contrast, when $\SNR=30$ dB, as seen in Fig. \ref{Fig: etaSNR30}, the performance of  optimized power allocation is  limited by DP for $\eta> 0.5\times 10 ^{-4}$. As for the static power allocation, the performance is always limited by DP for $\eta> 0.5\times 10 ^{-4}$, and increasing the SNR is not helpful in reducing the approximation error.


\begin{figure}[t]
\centering
\begin{minipage}{0.48\textwidth}
\centering
\includegraphics[width=7cm]{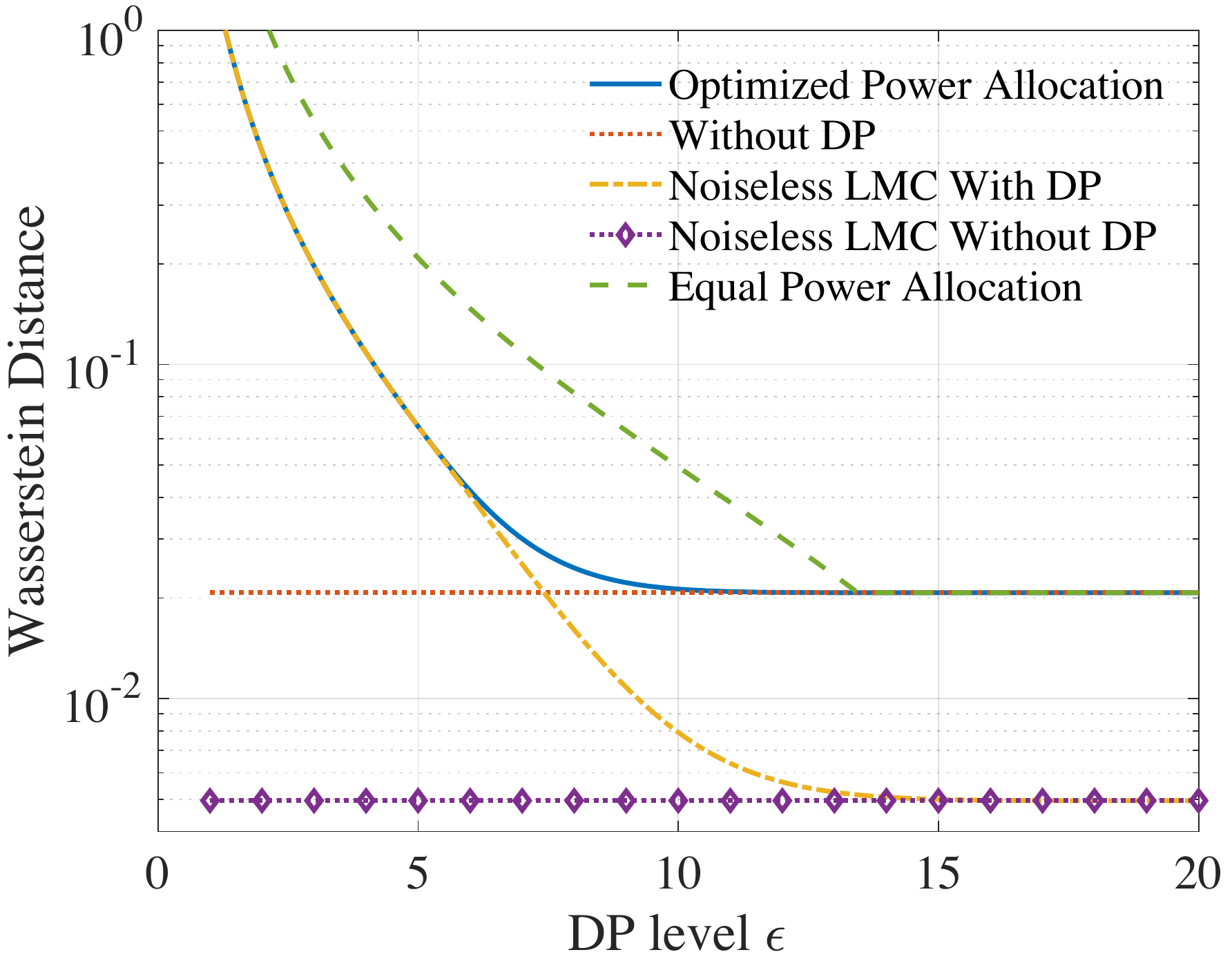}
\vspace{-6mm}
\caption{Bound  \eqref{eq: result convergence} on the 2-Wasserstein distance versus DP level $\epsilon$ for WFLMC and noiseless LMC schemes ($\eta=0.4/(\mu+L)$, $\delta=0.01$, $\SNR=20$ dB, $S_b=50$, $S_u=1$).}\label{Fig: dp}
\vspace{-2mm}
\end{minipage}
  \hspace{1mm}
\begin{minipage}{0.48\textwidth}
\centering    
\includegraphics[width=7cm]{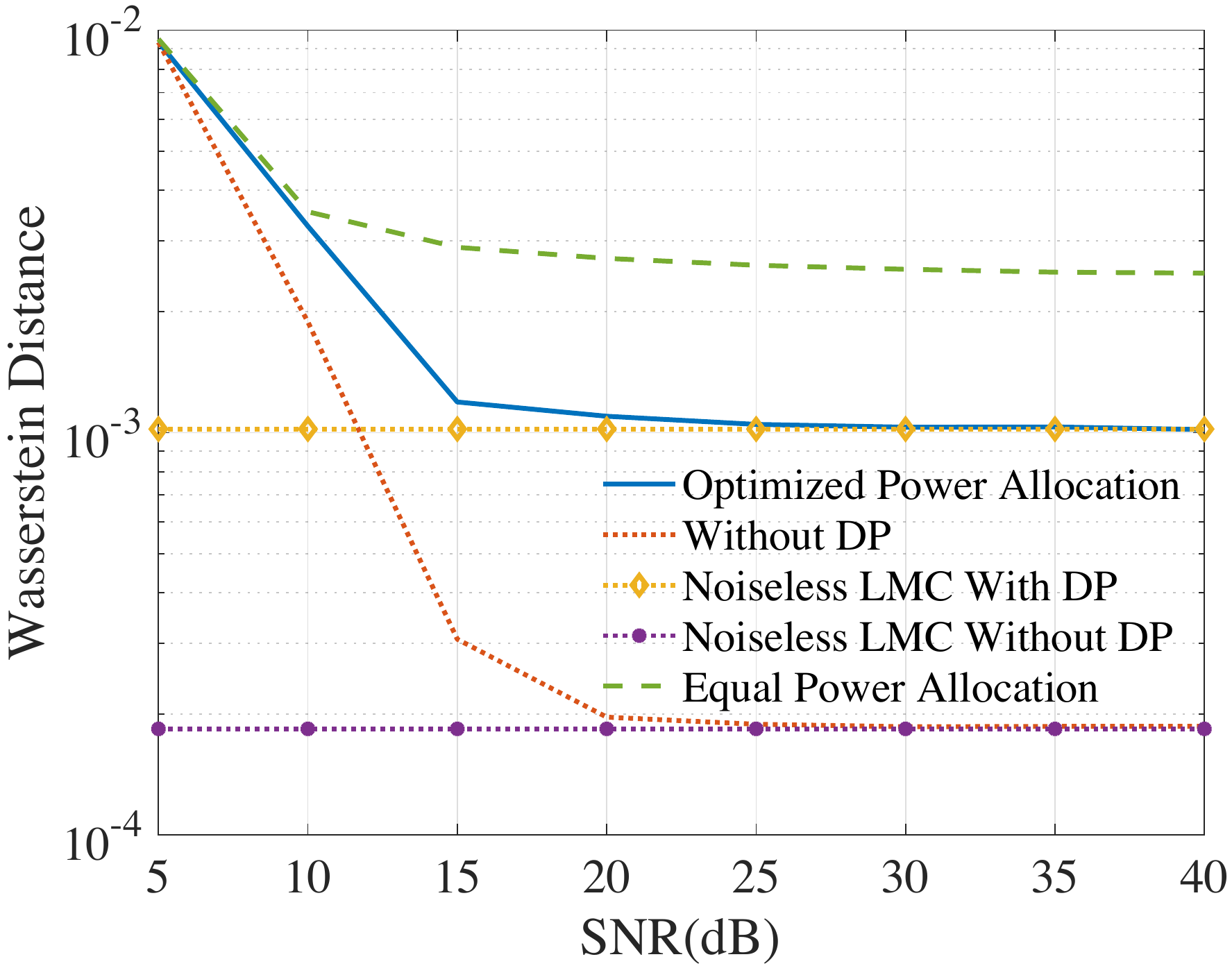}
\vspace{-6mm}
\caption{Evaluation of the 2-Wasserstein distance \eqref{eq: gauss ws eval} versus  SNR for WFLMC and noiseless LMC schemes ($\epsilon=15$, $\delta=0.01$, $S=100$, $S_u=50$).} \label{Fig: MultiSNR}
\vspace{-2mm}
\end{minipage}
\vspace{-4mm}
\end{figure}

\noindent $\bullet$ {\bf Impact of the privacy level:} Fig.~\ref{Fig: dp} plots the bound  \eqref{eq: result convergence} on the  2-Wasserstein distance as a function of the privacy level $\epsilon$. The performance of WFLMC is limited by DP until $\epsilon=13.5$, after which it becomes limited by the transmitted power and the DP constraint does not cause a performance loss. In the DP-limited regime, the proposed optimized power allocation outperforms static power allocation.  Furthermore, under a stricter DP requirement, i.e., with $\epsilon \leq 6$, the performance of optimized power allocation is equivalent to that under ideal communication, and the existence of channel noise does not impair performance.

\subsection{Multi-Sample Case}
We obtain the power control strategy in the multi-sample case by  solving the convex problem \eqref{ms wt OPT1} under the suboptimal  scheduling policy in \eqref{eq: min error g}. The optimized parameters are then used to implement WLMC. To evaluate the 2-Wasserstein distance, we numerically estimate the mean and covariance matrices of the samples over 100 experiments, and compare with the true posterior via \eqref{eq: gauss ws eval}.  The results present as the worst performance after the burn-in period, as given in the objective \eqref{opt obj}. Unless stated otherwise, the channels $h_k^{[s]}$ are randomly generated following $\mathcal{CN}(0, 0.01)$ for all $k$; the number of communication round is set to $S=100$; DP parameters are set to $\delta=0.01$ and $\epsilon=15$.

\noindent $\bullet$ {\bf Impact of the SNR:}
We first study the impact of the $\SNR=P/(mN_0)$ with number of samples $S_u=50$. The Wasserstein distances of all the wirelesses LMC schemes are seen to decrease with the SNR until approaching the DP-limited regime for optimized power allocation, and the LMC-limited regime for the scheme without DP constraint. The results emphasize the importance of optimizing power allocation in high SNR regime where the static power allocation is seen to have a significant performance degradation in DP-limited regime. Furthermore, they validate the insights obtained for the analysis under the simplified assumptions considered in~Theorem \ref{thm: optimal solution single sample}.

\noindent $\bullet$ {\bf Impact of the Burn-in Period:} In Fig. \ref{Fig: MultiBurnIn}, we investigate the impact of the number of communication periods, $S_b$, allocated for the burn-in period where we fix as $S=100$ the total number of communication rounds. In this experiment, the SNR is set to $30$ dB. Fig.  \ref{Fig: MultiBurnIn} shows that increasing the communication rounds in the burn-in period helps improve the quality of the samples produced after the burn-in period, as the Wasserstein distances of all the schemes are seen to decrease with the burn-in period. We also note that the enhanced sample quality costs at the cost of sample quantity. The performance of LMC is limited by the bias in discretization error that is determined by the learning rate and can not be diminished by increasing the duration of the burn-in period. Overall, the results emphasize that sample quality can be enhanced by optimizing the power allocation strategy. 

\begin{figure}[t]
\centering
\begin{minipage}{0.48\textwidth}
\centering    
\includegraphics[width=7cm]{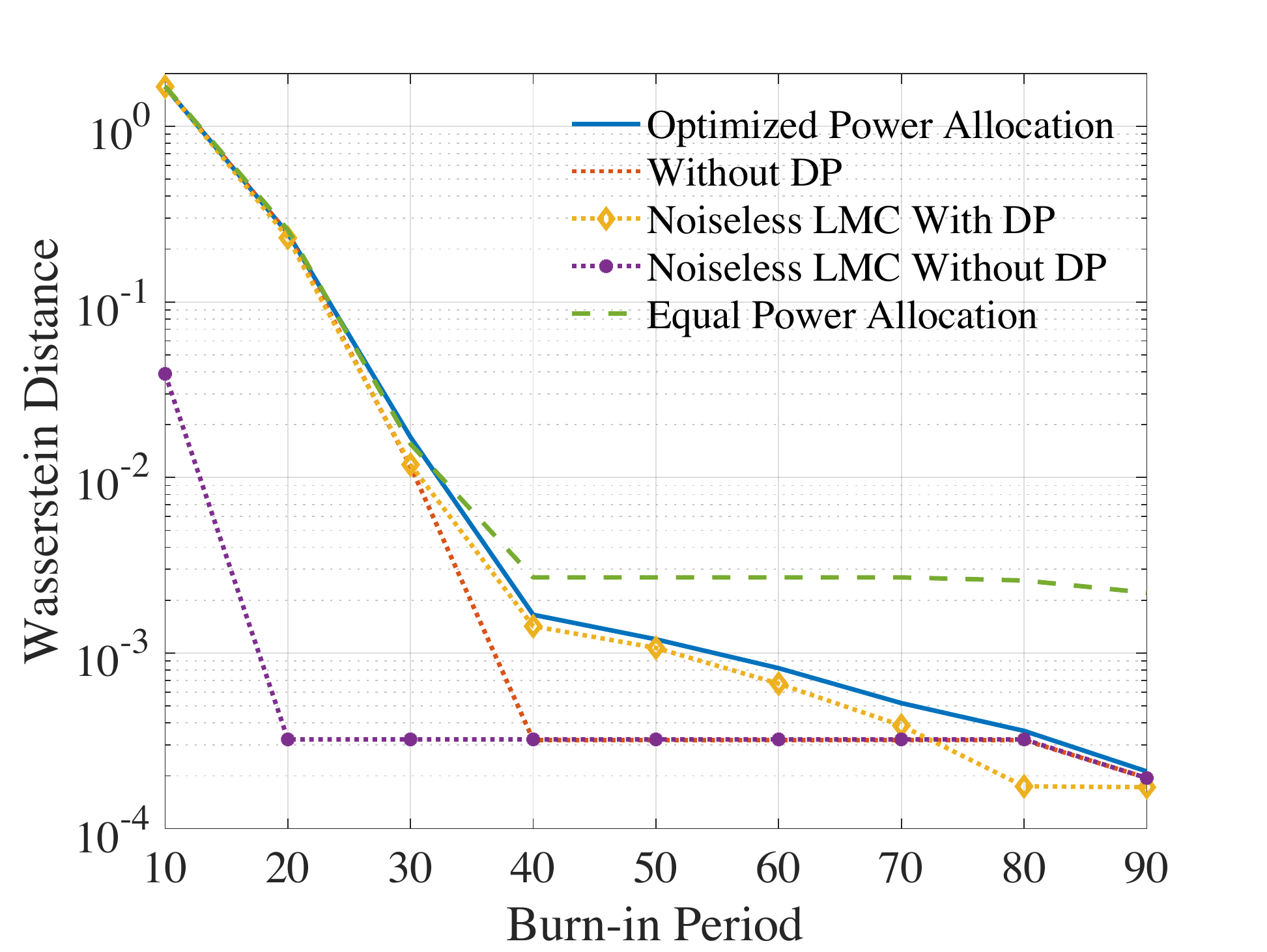}\vspace{-6mm}
\caption{Evaluation of the 2-Wasserstein distance \eqref{eq: gauss ws eval} versus the number of communication rounds in burn-in period $S_b$ for WFLMC and noiseless LMC schemes ($\epsilon=15$, $\delta=0.01$, $S=100$, $\SNR=30$ dB).}
\label{Fig: MultiBurnIn}
\vspace{-6mm}
\end{minipage}
  \hspace{1mm}
\begin{minipage}{0.48\textwidth}
\centering
\includegraphics[width=7cm]{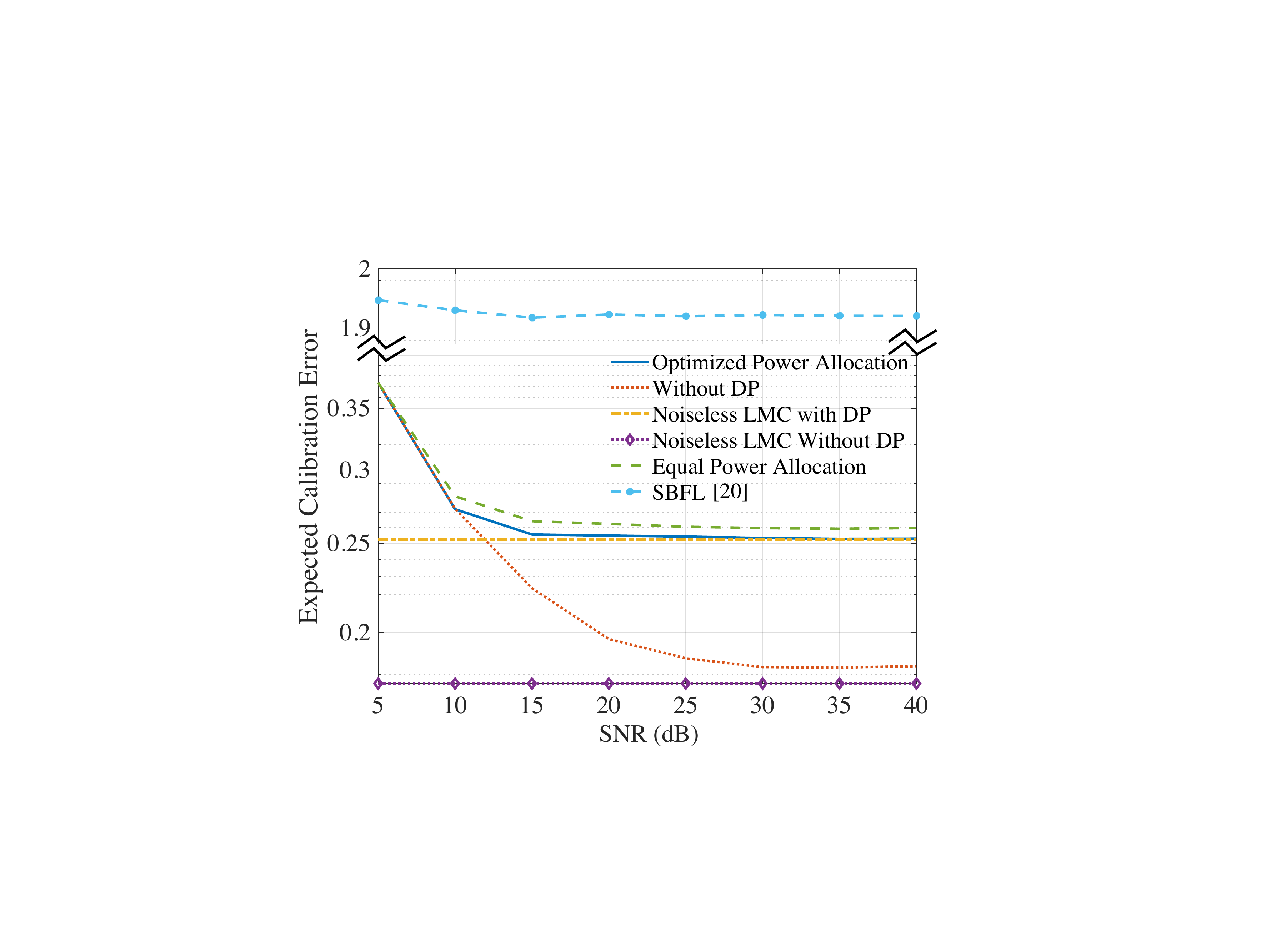}\vspace{-6mm}
\caption{Evaluation of expected calibration error versus  SNR for WFLMC, noiseless LMC schemes, and SBFL in \cite{lee2020bayesian} ($\epsilon=50$, $\delta=0.1$, $S=500$, $S_u=50$). \label{Fig: MNIST_SNR}}
\vspace{-6mm}
\end{minipage}
\end{figure}

{\color{black}
 \subsection{MNIST Data Set}

We now consider the problem of multinomial logistic regression on the MNIST data set, which is comprised of $C=10$ classes representing handwritten digits. We use $N=12,000$ data points, and $N/C$ data points for each class are randomly selected from the MINST training set. The original data, with dimension $784$,  is pre-processed by  projecting the input images into a subspace of lower dimension $30$ via principal component analysis (PCA). This is motivated by the fact that the MNIST images are known to be approximately supported on a manifold with intrinsic dimension lower than 30 \cite{hein2005intrinsic}.

For the learning model, the likelihood is considered as 
\begin{align}
 p(v_n | \{ {\bm \theta}_c\}_{c=1}^C, \bu_n)= \prod_{c=1}^C \bigg[   \frac{\exp({\bm \theta}_c ^{\sf T} \bu_n)}{ \sum_{c'=1}^C \exp({\bm \theta}_{c'}^{\sf T} \bu_n) }\bigg]^{I_{n,c}}, \tag{60}
\end{align}
where we have $I_{n,c}=  {\bf 1} [v_n= c] $. The prior of the model parameters ${\bm \theta}=\{\bm \theta_c\}_{c=1}^C$ is a produced Gaussian distribution $\cN (0, \bI_m)$ for $m=300$. To set the parameters $\mu$ and $L$, we leverage the inequality for the Hessian matrix  $  \bI_m   \preceq  \nabla^2 f ({\bm \theta}) \preceq \frac{1}{2} \l(\bI_c- \frac{1}{C}{\bf 1}_{c \times c}\r) \otimes  \bU \bU^{\sf T} +\bI_m $, where ${\bf 1}_{c\times c}$ is a $c\times c$ matrix with all elements $1$, and $\otimes$ represents Kronecker product. Specifically, we compute the minimum and maximum eigenvalues of the lower bound and upper bound respectively to set $\mu$ and $L$. The channels $h_k^{[s]}$ are randomly generated following the distribution $\mathcal{CN}(0, 10^{-4})$ for all $k$; the number of communication round is set to $S=600$; the number of communication burn-in periods is set to $S_b=550$; the clipping threshold in \eqref{eq: BV} is set to $\ell=300$; and DP parameters are set to $\delta=0.1$ and $\epsilon=50$.  The results are averaged over 50 experiments. 

For tractability, the optimization problem aims at minimizing the averaged Wasserstein distances across the samples after the burn-in period. The problem is convex and can be solved by the method in Sec \ref{sec: single sample case}.  We use the test set of MNIST to evaluate the expected calibration error \cite{guo2017calibration}. Furthermore, we implemented scalable Bayesian federated learning (SBFL) in \cite{lee2020bayesian} as another benchmark. We recall that SBFL is a frequentist learning protocol that aggregates the local gradients at the server via a point estimate that incorporates prior knowledge on the gradients. Under the assumption of orthogonal multiple access in \cite{lee2020bayesian},  the iteration time is set as $S/K$ for a fair comparison.

Fig. \ref{Fig: MNIST_SNR} plots the expected calibration error versus SNR with the number of samples $S_u=50$. First, the figure verifies the advantage of Bayesian learning, as all WFLMC schemes are seen to significantly outperform SBFL. Furthermore, by increasing the number of communication rounds, optimized power allocation can outperform the equal power allocation scheme.
}

\section{Conclusions}\label{sec: conclusions}

In this paper,  we have proposed a novel Bayesian federated learning (FL) protocol that implements Langevin Monte Carlo (LMC) via uncoded wireless transmission from devices to edge server. The learning protocol is enabled by over-the-air computing, which is akin to  previous works on wireless (frequentist) FL \cite{zhu2019broadband,liu2020privacy}, as well as by the novel idea of repurposing channel noise for MC sampling in Bayesian FL. The goal is to obtain samples at the edge server that are approximately drawn according to the global posterior distribution. 
The key idea of this work is to leverage the channel noise for both MC sampling and  privacy preservation. 
Under simplified assumptions, 
the analysis has revealed that system operating different regimes limited by LMC, DP or power constraints, depending on the values of learning rate, privacy parameters, and transmitted power. As for the general case, the problem of optimizing power allocation was proved to be convex. Simulation results have validated the analysis, yielding insights into conditions under which channel noise is not harmful to the system performance. 

As an extension of the current work,  it is interesting to optimize the learning rate schedule, which may not only benefit learning performance but also enhance privacy. 
Another direction is to study digital implementations of  wireless federated LMC as in \cite{zhu2020one}.  The current work can also be further generalized to multi-hop device-to-device (D2D) network topologies.  {\color{black}The analysis could also be extended to the scenario of non-ideal downlink communication as in \cite{wei2021federated, amiri2021convergence}.}

\appendix 

\subsection{Proof of Lemma~\ref{lemma: gradient error}}\label{proof: gradient error}
According to \eqref{eq: LMC wireless gd noise} and the definition of the gradient error \eqref{eq: grad error}, we have 
\begin{align}
\E\l[\| N_g^{[s]} \|^2 \r]&= \E \bigg[\bigg\| \frac{K}{K_a^{[s]}} \sum_{k\in \cK_a^{[s]} } \nabla f_k\big({\bm \theta}^{[s-1]}\big) + \sqrt{\widetilde{\beta}^{(s)}}\Delta^{[s]}  - \sum_{k=1 }^K \nabla f_k\big({\bm \theta}^{[s-1]}\big) \bigg\|^2\bigg]\nn\\
&=\E \bigg[\bigg\| \frac{K}{K_a^{[s]}} \sum_{k\in \cK_a^{[s]} } \nabla f_k\big({\bm \theta}^{[s-1]}\big)- \sum_{k=1 }^K \nabla f_k\big({\bm \theta}^{[s-1]}\big)    \bigg\|^2\bigg]+  \widetilde{\beta}^{(s)} \nn\\
&\overset{(a)}\leq \E\bigg[ \bigg(  \frac{K-K_a^{[s]} }{K_a^{[s]}} \sum_{k\in \cK_a^{[s]} } \big\|\nabla f_k\big({\bm \theta}^{[s-1]}\big)\big \| + \sum_{k\notin \cK_a^{[s]} } \big\|\nabla f_k\big({\bm \theta}^{[s-1]}\big)  \big\|\bigg)^2 \bigg]+  \widetilde{\beta}^{(s)}  \nn\\
&\overset{(b)}\leq 4\ell^2 \big(K-K_a^{[s]}\big)^2+\widetilde{\beta}^{(s)},  \nn
\end{align}
where $(a)$ is obtained by applying triangle inequality and $(b)$ following from Assumption \ref{assumption: BLL}. 

\subsection{Proof of Lemma~\ref{lemma: discr error}}\label{proof: discr error}
The vector $\bar{\bm \theta}^{(t)}$ is the continuous-time Langevin diffusion process in \eqref{eq: LD} and its distribution is assumed to be invariant, so that we have $N_d^{[s]}=N_d^{[1]}$ for all $s$. 
{\color{black}The proof follows from \cite[Lemma 3]{dalalyan2017further} with the caveat that we bound of $\E [\| N_d^{[1]} \|^2]$ in lieu of $\sqrt{\E [\| N_d^{[1]} \|^2]} $.  This is done as:
\begin{align}
&\E\big[\|N_d^{[s]}\|^2\big] = \E\big[\|N_d^{[1]}\|^2\big] =\E \l[\Big\|\int_{0}^{\eta}  \nabla f(\bar{\bm \theta}^{(t)})  - \nabla f(\bar{\bm \theta}^{(0)})   \mathrm{d} t  \Big\| ^2\r] \nn\\ 
&\overset{(a)}\leq \int_{0}^{\eta}  \E \l[ \Big\|  \nabla f(\bar{\bm \theta}^{(t)})  - \nabla f(\bar{\bm \theta}^{(0)}) \Big\| ^2  \r]     \mathrm{d} t \cdot \int_{0}^{\eta}  1 \mathrm{d} t   \overset{(b)}= \eta L^2   \int_{0}^{\eta}   \E \l[  \Big\|  -\int_{0}^{t}  \nabla f(\bar{\bm \theta}^{(s)})  \mathrm{d} s + \sqrt{2t} {\bm \xi}^{(1)}  \Big\| ^2 \r] \mathrm{d} t     \nn\\
&= \eta L^2   \int_{0}^{\eta}  \l( \E \l[  \Big\|  -\int_{0}^{t}  \nabla f(\bar{\bm \theta}^{(s)})  \mathrm{d} s\Big\| ^2  \r]    +2 (m \cdot t) \r)\mathrm{d} t     \nn\\ 
&\overset{(c)}\leq   \eta L^2   \int_{0}^{\eta}   t \int_{0}^{t}   \E \l[ \big\|  \nabla f(\bar{\bm \theta}^{(s)}) \big\| ^2 \r]   \mathrm{d} s  \mathrm{d} t      +\eta^3 L^2m   \overset{(d)} \leq \frac{\eta^4L^3 m}{3}+ \eta^3 L^2m,  \label{eq: nd}
\end{align}
where $(a)$ and $(c)$ are consequences of Cauchy-Schwarz inequality and of the interchanging the order of expectation and integral; $(b)$ is obtained by Assumption \ref{assumption: Lipschitz} and  \eqref{eq: diff LD}; $(d)$ is derived using \cite[Lemma 2]{dalalyan2017further}. }

\subsection{Proof of Proposition \ref{prop: convergence}}\label{proof: convergence}
 For any $\tau>0$, the error \eqref{eq: w2 mid} is bounded as 
  \begin{align}\label{eq: w2 mid ub}
{\E\l[\l\| {\bm \theta}^{[s]}-  \bar{\bm \theta}^{(s\eta)}\r\|^2\r]}&\leq (1+\tau) \E\bigg[\Big\| {\bm \theta}^{[s-1]}- \bar{\bm \theta}^{( (s-1)\eta)} - \eta \big[\nabla f({\bm \theta}^{[s-1]} ) - \nabla f(  \bar{\bm \theta}^{((s-1)\eta)} )\big] \Big\|^2 \bigg] \nn\\
&\qquad \qquad\qquad\qquad\qquad\qquad\qquad\quad+ (1+\tau^{-1}) \E[\|\eta N_g^{[s]}+N_d^{[s]}\|^2]  \nn\\ 
&\leq (1+\tau) \E\bigg[\Big\| {\bm \theta}^{[s-1]}- \bar{\bm \theta}^{( (s-1)\eta)} - \eta \big[\nabla f({\bm \theta}^{[s-1]} ) - \nabla f(  \bar{\bm \theta}^{((s-1)\eta)} )\big] \Big\|^2 \bigg] \nn\\
&\qquad \qquad\qquad\qquad\qquad + (1+\tau^{-1}) \l(2\eta^2\E[\| N_g^{[s]}\|^2]+2\E[\|N_d^{[s]}\|^2] \r),
\end{align} 
where the value of $\tau$ controls the convergence rate as detailed later. 
Furthermore, the first term is bounded by 
\begin{align}
\E\bigg[\Big\| {\bm \theta}^{[s-1]}- \bar{\bm \theta}^{( (s-1)\eta)} - \eta \big[\nabla f({\bm \theta}^{[s-1]} ) - \nabla f(  \bar{\bm \theta}^{((s-1)\eta)} )\big] \Big\|^2 \bigg]\leq \gamma^2 \E\l[\l\| {\bm \theta}^{[s-1]}-  \bar{\bm \theta}^{((s-1)\eta)}\r\|^2\r], \!\! \label{eq: ft bound}
\end{align}
where $\gamma=1-\eta \mu$  if $0<\eta\leq 2/(\mu+L)$; and $\gamma= \eta L-1 $  if $2/(\mu+L)\leq \eta \leq 2/L$. The proof starts with \eqref{eq: smooth and sc 2} through \cite[Lemma 1]{dalalyan2017further}. Specifically, plugging Lemmas \ref{lemma: gradient error} and \ref{lemma: discr error} and \eqref{eq: ft bound} into \eqref{eq: w2 mid ub}, and setting  $\tau=(\frac{1+\gamma}{2\gamma})^2-1$ yield 
\begin{align}\label{eq: rec wd}
&{\E\l[\l\| {\bm \theta}^{[s]}-  \bar{\bm \theta}^{(s\eta)}\r\|^2\r]} \leq \Big(\frac{1+\gamma}{2}\Big)^2{\E\l[\l\| {\bm \theta}^{[s-1]}-  \bar{\bm \theta}^{((s-1)\eta))}\r\|^2\r]}+ \frac{2(1+\gamma)^2}{(1-\gamma)(1+3\gamma)} \nn\\
 &  \qquad  \qquad \qquad \times \bigg[ \frac{\eta^4L^3 m}{3}+ \eta^3 L^2m+ 4\eta^2\ell^2\big(K-K_a^{[s]}\big)^2  + \max \bigg\{0, \frac{\eta^2N_0K^2}{(\alpha^{[s] } K_a^{[s]})^2} -{2}{\eta} \bigg\} \bigg] .
\end{align}
The desired result in Proposition \ref{prop: convergence} is obtained by using the inequality $\frac{1+\gamma}{1+3\gamma}<1$, applying \eqref{eq: rec wd} recursively, and the initial point ${\bm \theta}^{[0]}$ is set to satisfy $W_2\big(p^{[0]}(\bm \theta),p({\bm\theta}|\cD)\big)^2= {\E[\| {\bm \theta}^{[0]}-  \bar{\bm \theta}^{(0)}\|^2]}$.

\subsection{Proof of Theorem~\ref{thm: optimal solution single sample}} \label{proof: optimal solution single sample} 
Substituting  $(\alpha^{[s] } )^2$ with $a^{[s]}$, and $a^{[s]}\geq0$,  the optimization problem \eqref{os wt OPT1} is equivalent to the following convex program
\begin{subequations}\label{os wt OPT2}
 \begin{align}
{(\text {\bf Case 1 Opt. Equiv.})} \    \min_{{\{a^{[s]}\}}_{s=1}^{S_b+1}} &\quad  \sum_{s=1}^{S_b+1} \Big(\frac{1+\gamma}{2}\Big)^{-2s} \l( \frac{\eta^2N_0K^2}{a^{[s ]} K_a^2} -{2}{\eta}\r) \label{os wt obj2} \\
{\rm s.t.} \quad &
\sum_{s=1}^{S_b+1} a^{[s]}\leq  \frac{ N_0\mathcal{R}_{\sf dp}(\epsilon, \delta)}{2 \ell^2},  \label{os wt dp2}\\
&  a^{[s]} \leq \min_{k\in \cK_a}   \frac{P | h_k|^2}{\ell^2} , \quad \forall  k \in \cK_a, s=1,\cdots, S,   \label{os wt power constraint2} \\
& 0\leq a^{[s]}\leq \frac{\eta N_0 K^2}{2 K_a^2}, \quad \forall  k \in \cK_a, s=1,\cdots, S. 
 \end{align}
\end{subequations}
To solve this problem, the partial Lagrange function is defined as 
 \begin{align}
\mathcal{L}= \sum_{s=1}^{S_b+1} \Big(\frac{1+\gamma}{2}\Big)^{-2s} \frac{\eta^2N_0K^2}{a^{[s ]} K_a^2}+  \lambda \l(\sum_{s=1}^{S_b+1}  {a^{[s]}}-\frac{ N_0\mathcal{R}_{\sf dp}(\epsilon, \delta)}{2 \ell^2}  \r)
 - \sum_{s=1}^{S_b+1}  \psi^{[s]}a^{[s]}  \nn\\
 +\sum_{s=1}^{S_b+1} \varepsilon^{[s]} \bigg(a^{[s]}-\min\Big\{  \min_{k\in \cK_a }   \frac{P |h_k|^2}{\ell^2},\frac{\eta N_0K^2}{2K_a^2} \Big\}\bigg),
 \end{align}
where $\lambda\geq0$, $\psi^{[s]}\geq0$,  and $ \varepsilon^{[s]}\geq0$ are the Lagrange multipliers associated respectively with the DP constraint,  non-negative parameter constraints, joint transmit power and non-negative noise variance constraints.    Then, applying the KKT conditions leads to the following necessary and sufficient conditions for optimality
  \begin{subequations}\label{eq: KKT case1}
 \begin{align}
&\frac{\partial \mathcal{L}}{\partial( a_{\sf opt}^{[s]})}=-\Big(\frac{1+\gamma}{2}\Big)^{-2s} \frac{\eta^2N_0K^2}{(a^{[s] }_{\sf opt} K_a)^2  }+ \lambda_{\sf opt} -\psi_{\sf opt}^{[s]}+\varepsilon_{\sf opt}^{[s]} =0 , \  \forall s ,   \label{case1: KKT lanrange at} \\
&\lambda \l(\sum_{s=1}^{S_b+1}  {a_{\sf opt}^{[s]}}-\frac{ N_0\mathcal{R}_{\sf dp}(\epsilon, \delta)}{2 \ell^2}  \r)=0, \label{case1: KKT DP}\\
&  \varepsilon_{\sf opt}^{[s]} \bigg(a_{\sf opt}^{[s]}-\min\Big\{  \min_{k\in \cK_a }   \frac{P |h_k|^2}{\ell^2},\frac{\eta N_0K^2}{2K_a^2} \Big\}\bigg)=0,  \  \forall s  ,  \label{case1: KKT power constraint}\\
&  -\psi_{\sf opt}^{[s]}a_{\sf opt}^{[s]} =0,\  \forall s ,     \label{case1: KKT non zero} \\
& \sum_{s=1}^{S_b+1}  {a_{\sf opt}^{[s]}}-\frac{ N_0\mathcal{R}_{\sf dp}(\epsilon, \delta)}{2 \ell^2} \leq 0 ,  \label{case1: KKT ineq DP}\\
&a_{\sf opt}^{[s]}-\min\Big\{  \min_{k\in \cK_a }   \frac{P |h_k|^2}{\ell^2},\frac{\eta N_0K^2}{2K_a^2} \Big\}  \leq 0, \ \forall s,   \label{case1: KKT ineq power} \\
& -a^{(t)}_{\sf opt} \leq 0.  \label{case1: KKT ineq non zero} 
 \end{align} 
 \end{subequations}
 Combing \eqref{case1: KKT lanrange at} and \eqref{case1: KKT non zero} we have 
\begin{align}
(a^{[s] }_{\sf opt})^2 = \Big(\frac{1+\gamma}{2}\Big)^{-2s} \frac{\eta^2N_0 K^2}{K_a^2\big(\lambda_{\sf opt} +\varepsilon_{\sf opt}^{[s]}\big)}
\end{align}
On the other hand, \eqref{case1: KKT power constraint} indicates, that if $\varepsilon_{\sf opt}^{[s]}>0$, then we have 
\begin{align}\label{case1: opta without DP}
a^{[s] }_{\sf opt} =\min\Big\{  \min_{k\in \cK_a^{[s]}}   \frac{P  |h_k|^2}{\ell^2},\frac{\eta N_0K^2}{2 K_a ^2} \Big\}.  
\end{align}
Furthermore, according to \eqref{case1: KKT DP},  if  condition 
\begin{align}\label{eq: equality DP}
\sum_{s=1}^{S_b+1}  {a_{\sf opt}^{[s]}}<\frac{ N_0\mathcal{R}_{\sf dp}(\epsilon, \delta)}{2 \ell^2}  
\end{align}
holds, we have $\lambda_{\sf opt}=0$, and thus $\varepsilon_{\sf opt}^{[s]}>0$ for all $s$ with the optimal value of $a^{[s] }_{\sf opt} $ given in  \eqref{case1: opta without DP}. 
Otherwise we have the following equality 
\begin{align}\label{eq: equality lambda}
\sum_{s=1}^{S_b+1}  {a_{\sf opt}^{[s]}}-\frac{ N_0\mathcal{R}_{\sf dp}(\epsilon, \delta)}{2 \ell^2} = 0, 
\end{align} 
and the optimal solution is 
\begin{align}\label{case1: opta with DP}
a^{[s] }_{\sf opt} =\min\bigg\{ \Big(\frac{1+\gamma}{2}\Big)^{-s}\frac{\eta K N_0^{1/2}}{K_a   \lambda_{\sf opt}^{1/2}}, \min_{k\in \cK_a}   \frac{P  |h_k|^2}{\ell^2},\frac{\eta N_0K^2}{2 K_a ^2} \bigg\}.
\end{align}
The solution of $ \lambda_{\sf opt} $ is obtained by solving \eqref{eq: equality lambda}.  Reverting to the original variables via $\alpha^{[s]}=\sqrt{a^{[s]}}$ and separating the solution \eqref{case1: opta without DP}  into $a_{\sf opt}^{[s]}=\min_{k\in\cK_a} P |h_k|^2/\ell^2$ and $a_{\sf opt}^{[s]}=\eta N_0 K^2/ (2K_a^2)$ yield desired results in the theorem. 
 


\begin{thebibliography}{10}

\bibitem{park2019wireless}
J.~Park, S.~Samarakoon, M.~Bennis, and M.~Debbah, ``Wireless network
  intelligence at the edge,'' {\em Proc. IEEE}, vol.~107, no.~11,
  pp.~2204--2239, 2019.

\bibitem{zhu2020toward}
G.~Zhu, D.~Liu, Y.~Du, C.~You, J.~Zhang, and K.~Huang, ``Toward an intelligent
  edge: wireless communication meets machine learning,'' {\em IEEE Commun.
  Mag.}, vol.~58, pp.~19--25, Jan. 2020.

\bibitem{sery2021over}
T.~Sery, N.~Shlezinger, K.~Cohen, and Y.~C. Eldar, ``Over-the-air federated
  learning from heterogeneous data,'' {\em IEEE Trans. Signal Process.},
  vol.~69, pp.~3796--3811, June 2021.

\bibitem{yang2021revisiting}
H.~H. Yang, Z.~Chen, T.~Q. Quek, and H.~V. Poor, ``Revisiting analog
  over-the-air machine learning: The blessing and curse of interference,'' {\em
  [Online]. Available: https://arxiv.org/pdf/2107.11733.pdf}, 2021.

\bibitem{zhu2019broadband}
G.~Zhu, Y.~Wang, and K.~Huang, ``Broadband analog aggregation for low-latency
  federated edge learning,'' {\em IEEE Trans. Wireless Commun.}, vol.~19,
  pp.~491--506, Oct. 2019.

\bibitem{liu2020privacy}
D.~Liu and O.~Simeone, ``Privacy for free: Wireless federated learning via
  uncoded transmission with adaptive power control,'' {\em IEEE J. Sel. Areas
  Commun.}, vol.~39, pp.~170--185, Nov. 2020.

\bibitem{zhu2020one}
G.~Zhu, Y.~Du, D.~G{\"u}nd{\"u}z, and K.~Huang, ``One-bit over-the-air
  aggregation for communication-efficient federated edge learning: Design and
  convergence analysis,'' {\em IEEE Trans. Wireless Commun.}, vol.~20,
  pp.~2120--2135, March 2021.

\bibitem{dwork2014algorithmic}
C.~Dwork, A.~Roth, {\em et~al.}, ``The algorithmic foundations of differential
  privacy,'' {\em Foundations and Trends{\textregistered} in Theoretical
  Computer Science}, vol.~9, no.~3--4, pp.~211--407, 2014.

\bibitem{koda2020differentially}
Y.~Koda, K.~Yamamoto, T.~Nishio, and M.~Morikura, ``Differentially private
  aircomp federated learning with power adaptation harnessing receiver noise,''
  in {\em Proc. IEEE Glob. Commun. Conf. (GLOBECOM)}, (Virtual), Dec. 2020.

\bibitem{guo2017calibration}
C.~Guo, G.~Pleiss, Y.~Sun, and K.~Q. Weinberger, ``On calibration of modern
  neural networks,'' in {\em Proc. Intl. Conf. Mach. Learning (ICML)}, (Sydney,
  Australia), pp.~1321--1330, Aug. 2017.

\bibitem{lakshminarayanan2016simple}
B.~Lakshminarayanan, A.~Pritzel, and C.~Blundell, ``Simple and scalable
  predictive uncertainty estimation using deep ensembles,'' {\em [Online].
  Available: https://arxiv.org/pdf/1612.01474.pdf}, 2016.

\bibitem{angelino2016patterns}
E.~Angelino, M.~J. Johnson, and R.~P. Adams, ``Patterns of scalable bayesian
  inference,'' {\em [Online]. Available: https://arxiv.org/pdf/1602.05221.pdf},
  2016.

\bibitem{ma2015complete}
Y.-A. Ma, T.~Chen, and E.~Fox, ``A complete recipe for stochastic gradient
  {MCMC},'' in {\em Proc. Adv. Neural Info. Proc. Syst. (NIPS)}, vol.~28,
  (Montreal, Canada), Dec. 2015.

\bibitem{dalalyan2020sampling}
A.~S. Dalalyan and L.~Riou-Durand, ``On sampling from a log-concave density
  using kinetic langevin diffusions,'' {\em Bernoulli}, vol.~26, no.~3,
  pp.~1956--1988, 2020.

\bibitem{zou2021convergence}
D.~Zou and Q.~Gu, ``On the convergence of hamiltonian monte carlo with
  stochastic gradients,'' in {\em Proc. Conf. Mach. Learning (ICML)},
  (Virtual), pp.~13012--13022, July 2021.

\bibitem{alistarh2017qsgd}
D.~Alistarh, D.~Grubic, J.~Li, R.~Tomioka, and M.~Vojnovic, ``{QSGD}:
  Communication-efficient {SGD} via gradient quantization and encoding,'' in
  {\em Proc. Adv. Neural Info. Proc. Syst. (NIPS)}, (Long Beach, USA), Dec.
  2017.

\bibitem{sery2020analog}
T.~Sery and K.~Cohen, ``On analog gradient descent learning over multiple
  access fading channels,'' {\em IEEE Trans. Signal Process.}, vol.~68,
  pp.~2897--2911, 2020.

\bibitem{fan2021temporal}
D.~Fan, X.~Yuan, and Y.-J.~A. Zhang, ``Temporal-structure-assisted gradient
  aggregation for over-the-air federated edge learning,'' {\em [Online].
  Available: https://arxiv.org/pdf/2103.02270.pdf}, 2021.

\bibitem{lee2020bayesian}
S.~Lee, C.~Park, S.-N. Hong, Y.~C. Eldar, and N.~Lee, ``Bayesian federated
  learning over wireless networks,'' {\em [Online]. Available:
  https://arxiv.org/pdf/2012.15486.pdf}, 2020.

\bibitem{chen2020fedbe}
H.-Y. Chen and W.-L. Chao, ``Fedbe: Making bayesian model ensemble applicable
  to federated learning,'' in {\em Proc. Intl. Conf. Learning Representations
  (ICLR)}, (Virtual), May 2021.

\bibitem{zhu2021data}
Z.~Zhu, J.~Hong, and J.~Zhou, ``Data-free knowledge distillation for
  heterogeneous federated learning,'' {\em [Online]. Available:
  https://arxiv.org/pdf/2105.10056.pdf}, 2021.

\bibitem{corinzia2019variational}
L.~Corinzia and J.~M. Buhmann, ``Variational federated multi-task learning,''
  {\em [Online]. Available: https://arxiv.org/pdf/1906.06268.pdf}, 2019.

\bibitem{kassab2020federated}
R.~Kassab and O.~Simeone, ``Federated generalized bayesian learning via
  distributed stein variational gradient descent,'' {\em [Online]. Available:
  https://arxiv.org/pdf/2009.06419.pdf}, 2020.

\bibitem{elfederated}
K.~el~Mekkaoui, D.~Mesquita, P.~Blomstedt, and S.~Kaski, ``Federated stochastic
  gradient langevin dynamics,'' in {\em Proc. 37th Conf. Uncertain. Artif.
  Intell. (UAI)}, (Virtual), July 2021.

\bibitem{vono2021qlsd}
M.~Vono, V.~Plassier, A.~Durmus, A.~Dieuleveut, and E.~Moulines, ``{QLSD}:
  {Quantised} {Langevin} stochastic dynamics for {Bayesian} federated
  learning,'' {\em [Online]. Available: https://arxiv.org/pdf/2106.00797.pdf},
  2021.

\bibitem{gandikota2019vqsgd}
V.~Gandikota, R.~K. Maity, and A.~Mazumdar, ``{vqSGD}: Vector quantized
  stochastic gradient descent,'' {\em [Online]. Available:
  https://arxiv.org/pdf/1911.07971.pdf}, 2019.

\bibitem{agarwal2018cpsgd}
N.~Agarwal, A.~T. Suresh, F.~X.~X. Yu, S.~Kumar, and B.~McMahan, ``cp{SGD}:
  Communication-efficient and differentially-private distributed {SGD},'' in
  {\em Proc. Adv. Neural Info. Proc. Syst. (NIPS)}, (Montreal, Canada), Dec.
  2018.

\bibitem{seif2020wireless}
M.~Seif, R.~Tandon, and M.~Li, ``Wireless federated learning with local
  differential privacy,'' in {\em Proc. IEEE Intl. Symp. Info. Theory (ISIT)},
  (Los Angeles, USA), June 2020.

\bibitem{zhang2021turning}
Z.~Zhang, G.~Zhu, R.~Wang, V.~K. Lau, and K.~Huang, ``Turning channel noise
  into an accelerator for over-the-air principal component analysis,'' {\em
  [Online]. Available: https://arxiv.org/pdf/2104.10095.pdf}, 2021.

\bibitem{wang2015privacy}
Y.-X. Wang, S.~Fienberg, and A.~Smola, ``Privacy for free: Posterior sampling
  and stochastic gradient {Monte} {C}arlo,'' in {\em Proc. Conf. Mach. Learning
  (ICML)}, (Lille, France), July 2015.

\bibitem{li2019connecting}
B.~Li, C.~Chen, H.~Liu, and L.~Carin, ``On connecting stochastic gradient
  {MCMC} and differential privacy,'' in {\em Proc. Intl. Conf. Artif. Intell.
  Stat. (AISTATS)}, (Naha, Japan), April 2019.

\bibitem{dalalyan2017further}
A.~Dalalyan, ``Further and stronger analogy between sampling and optimization:
  Langevin monte carlo and gradient descent,'' in {\em Proc. Conf. Learning
  Theory (COLT)}, (Amsterdam, Netherlands), July 2017.

\bibitem{liu2021channel}
D.~Liu and O.~Simeone, ``Channel-driven {M}onte {C}arlo sampling for {B}ayesian
  distributed learning in wireless data centers,'' {\em [Online]. Available:
  https://arxiv.org/pdf/2103.01351.pdf}, 2021.

\bibitem{rabinovich2015variational}
M.~Rabinovich, E.~Angelino, and M.~I. Jordan, ``Variational consensus {M}onte
  {C}arlo,'' in {\em Proc. Adv. Neural Info. Proc. Syst. (NIPS)}, (Montreal,
  Canada), Dec. 2015.

\bibitem{bhattacharya1978criteria}
R.~Bhattacharya {\em et~al.}, ``Criteria for recurrence and existence of
  invariant measures for multidimensional diffusions,'' {\em The Annals of
  Probability}, vol.~6, no.~4, pp.~541--553, 1978.

\bibitem{chatterji2018theory}
N.~Chatterji, N.~Flammarion, Y.~Ma, P.~Bartlett, and M.~Jordan, ``On the theory
  of variance reduction for stochastic gradient monte carlo,'' in {\em Proc.
  Intl. Conf. Mach. Learning (ICML)}, (Stockholm, Sweden), pp.~764--773, July
  2018.

\bibitem{mahmood2019time}
A.~Mahmood, M.~I. Ashraf, M.~Gidlund, J.~Torsner, and J.~Sachs, ``Time
  synchronization in 5{G} wireless edge: Requirements and solutions for
  critical-mtc,'' {\em IEEE Commun. Mag.}, vol.~57, pp.~45--51, Dec. 2019.

\bibitem{arjovsky2017wasserstein}
M.~Arjovsky, S.~Chintala, and L.~Bottou, ``Wasserstein generative adversarial
  networks,'' in {\em Proc. Intl. Conf. Mach. Learning (ICML)}, (Sydney,
  Australia), pp.~214--223, Aug. 2017.

\bibitem{bubeck2014convex}
S.~Bubeck, ``Convex optimization: Algorithms and complexity,'' {\em Found.
  Trends Mach. Learn.}, vol.~8, pp.~231--357, Nov. 2015.

\bibitem{chen2020understanding}
X.~Chen, Z.~S. Wu, and M.~Hong, ``Understanding gradient clipping in private
  {SGD}: A geometric perspective,'' in {\em Proc. Adv. Neural Info. Proc. Syst.
  (NIPS)}, (Virtual), Dec. 2020.

\bibitem{ren2020scheduling}
J.~Ren, Y.~He, D.~Wen, G.~Yu, K.~Huang, and D.~Guo, ``Scheduling for cellular
  federated edge learning with importance and channel awareness,'' {\em IEEE
  Trans. Wireless Commun.}, vol.~19, no.~11, pp.~7690--7703, 2020.

\bibitem{liu2020data}
D.~Liu, G.~Zhu, J.~Zhang, and K.~Huang, ``Data-importance aware user scheduling
  for communication-efficient edge machine learning,'' {\em IEEE Transactions
  on Cognitive Communications and Networking}, vol.~7, no.~1, pp.~265--278,
  2020.

\bibitem{WFLMCarxiv}
D.~Liu and O.~Simeone, ``Wireless federated langevin monte carlo: Repurposing
  channel noise for bayesian sampling and privacy,'' {\em [Online]. Available:
  https://arxiv.org/pdf/2108.07644.pdf}, 2021.

\bibitem{sun2018gaussian}
C.~Sun, H.~Yan, X.~Qiu, and X.~Huang, ``Gaussian word embedding with a
  wasserstein distance loss,'' {\em [Online]. Available:
  https://arxiv.org/pdf/1808.07016.pdf}, 2018.

\bibitem{hein2005intrinsic}
M.~Hein and J.-Y. Audibert, ``Intrinsic dimensionality estimation of
  submanifolds in rd,'' in {\em Proc. Intl. Conf. Mach. Learning (ICML)},
  (Bonn, Germany), Aug. 2005.

\bibitem{wei2021federated}
X.~Wei and C.~Shen, ``Federated learning over noisy channels: Convergence
  analysis and design examples,'' {\em [Online]. Available:
  https://arxiv.org/pdf/2101.02198.pdf}, 2021.

\bibitem{amiri2021convergence}
M.~M. Amiri, D.~G{\"u}nd{\"u}z, S.~R. Kulkarni, and H.~V. Poor, ``Convergence
  of federated learning over a noisy downlink,'' {\em IEEE Transactions on
  Wireless Communications}, 2021.

\end{thebibliography}
\end{document}